\newcommand{\mean}[1]{\langle{#1}\rangle}
\newcommand{\pro}[2]{\langle{#1}|{#2}\rangle}
\newcommand{\bra}[1]{\langle{#1}|}
\newcommand{\ket}[1]{|{#1}\rangle}
\newcommand{\dgg}{^{\dagger}}
\newcommand{\half}{\frac{1}{2}}
\newcommand{\veca}{\mbox{\boldmath $a$}}
\newcommand{\vecb}{\mbox{\boldmath $b$}}
\newcommand{\vecx}{\mbox{\boldmath $x$}}
\newcommand{\vecy}{\mbox{\boldmath $y$}}
\newcommand{\vecu}{\mbox{\boldmath $u$}}
\newtheorem{theorem}{Theorem}[section]
\newtheorem{corollary}{Corollary}[section]
\newtheorem{lemma}{Lemma}[section]
\newtheorem{definition}{Definition}[section]
\newtheorem{example}{Example}[section]
\begin{document}

\title{
System identification for \\ passive linear quantum systems
}

%





\author{M\u{a}d\u{a}lin Gu\c{t}\u{a} and Naoki Yamamoto
\thanks{M. Gu\c{t}\u{a} is with the School of Mathematical Sciences, University 
        of Nottingham, 
        University Park, NG7 2RD Nottingham, UK 
        (email: madalin.guta@nottingham.ac.uk). 
        N. Yamamoto is with the Department of Applied Physics and 
        Physico-Informatics, Keio University, 
        Hiyoshi 3-14-1, Kohoku, Yokohama 223-8522, Japan 
        (e-mail: yamamoto@appi.keio.ac.jp). }
}

\maketitle


%
%
%
%


\begin{abstract}

System identification is a key enabling component for the implementation 
of quantum technologies, including quantum control. 
In this paper, we consider the class of passive linear input-output systems, 
and investigate several basic questions: 
(1)  which parameters  can be identified? 
(2) Given sufficient input-output data, how do we reconstruct the system 
parameters? 
(3) How can we optimize the estimation precision by preparing appropriate 
input states and performing measurements on the output? 
We show that minimal systems can be identified up to a unitary 
transformation on the modes, and systems satisfying a Hamiltonian connectivity 
condition called ``infecting'' are completely identifiable. 
We propose a frequency domain design based on a Fisher information 
criterion, for optimizing the estimation precision for coherent input state. 
As a consequence of the unitarity of the transfer function, we show that the 
Heisenberg limit with respect to the input energy can be achieved using 
non-classical input states.

\end{abstract}

\begin{keywords}
Quantum information and control; System identification; Linear systems; Estimation; Stochastic systems
\end{keywords}

%

\IEEEpeerreviewmaketitle

\section{Introduction}

We are currently witnessing the beginning of a quantum engineering 
revolution \cite{Dowling&Milburn}, marking a shift from ``classical 
devices'' which are macroscopic systems described by deterministic or 
stochastic equations, to ``quantum devices''  which exploit fundamental 
properties of quantum mechanics, with applications ranging from 
computation to secure communication and metrology 
\cite{NielsenBook,Furusawa-2011}. 
While control theory was developed from the need for predictability in the 
behavior of ``classical'' dynamical systems, quantum filtering and quantum 
feedback control theory 
\cite{Belavkin1999,Mabuchi&Khaneja,James&Bouten&vanHandel} deal 
with similar questions in the mathematical framework of quantum 
dynamical systems.

System identification is an essential component of control theory, which 
deals with the estimation of unknown dynamical parameters of input-output 
systems; 
in particular, the identification of \emph{linear} systems is a well studied 
subject in classical systems theory \cite{LjungBook}. 
A similar task arises in the quantum setup, and various aspects of the 
\emph{quantum system identification} problem have been considered 
in the recent literature, cf. 
\cite{Mabuchi1996,Gambetta2001,Stockton2004,Chase2009,Burgarth2009PRA,
Burgarth2009NJP,Schirmer&Oi,Burgarth2011NJP,KatoYamamoto} for a 
shortlist of recent results. 
Further, detailed statistical analysis for some dynamical quantum 
identification problems have been demonstrated 
\cite{Guta2011,Catana&vanHorssen&Guta,Guta&Kiukas,Catana&Bouten&Guta}.

In this paper, we focus on the class of \emph{passive linear quantum system} 
\cite{Gough2008,Nurdin2010,Petersen2011,WisemanBook}, which serves as 
a device for several applications in quantum information technology, 
such as entanglement generation 
\cite{Parkins,Ficek2009,Krauter,Muschik,Yamamoto2012}, quantum memory 
\cite{Gorshkov,Simon,Majer,Hush,He2009,Yamamoto2014}, and linear 
quantum computing \cite{KLM}. 
Analyzing this important class of systems provides the foundation for the general case, 
but it has a clear interest in its own right in the context of estimation, as described later
in this section. 
The system consists of a number of quantum variables (e.g. the electromagnetic 
field inside an optical cavity), and is coupled with the quantum stochastic input 
consisting of non-commuting noise processes (e.g. a laser impinging onto the 
cavity mirror). 
As a result of the quantum mechanical interaction between system and input, 
the latter is transformed into an output quantum signal which can be measured 
to produce a classical stochastic measurement process. 
In this context, we address the problem of identifying the linear system by 
appropriately choosing the state of its input and performing measurements 
on the output (see Fig.~\ref{Sys ID setup}).

\begin{figure}
\centering
\includegraphics[scale=0.34]{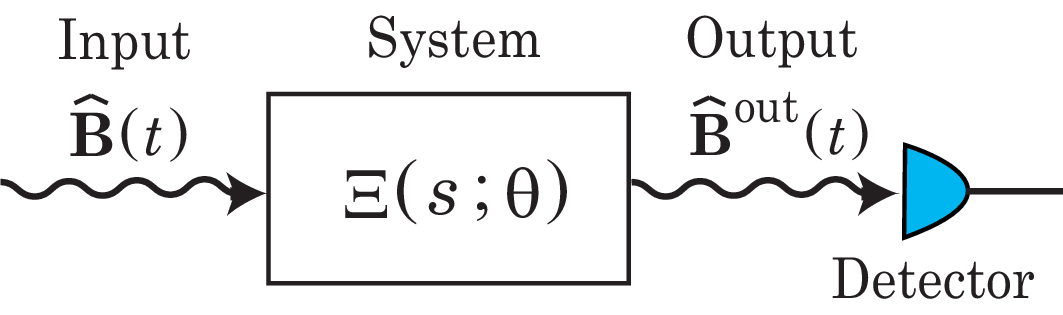}
\caption{\label{Sys ID setup}
Setup of system identification for linear quantum systems. 
The experimenter can prepare a time-dependent input state, and perform 
a continuous-time measurement on the output, from which the unknown 
system parameters $\theta$ are estimated. 
The input-output relation is encoded in the transfer function $\Xi(s;\theta)$.
}
\end{figure}

In contrast to the classical case, a systematic methodology for linear quantum 
system identification has not yet been developed. 
Our aim is to fill this gap by investigating the following questions. 
(1) {\it Identifiability}: which system parameters can be in principle identified? 
(2) {\it Identification method}: given sufficient input-output data, how can 
we actually reconstruct system parameters? 
(3) {\it Statistics}: how well can we estimate unknown parameters by 
preparing appropriate input states and performing measurements on the 
output? 
The key fact to solve these problems is that, for linear systems, the Laplace 
domain input and output fields are related by a simple linear transformation 
represented by the \emph{transfer function matrix}.

Below we give a more detailed account of the above-mentioned problems and 
the results obtained in this paper. 
First, the system identifiability is the property guaranteeing that all the 
system parameters can be in principle uniquely determined from the 
input-output data. 
This is actually an important notion in the classical case as well 
\cite{Anderson1969,Glover1974,Grewal1976}, and recently we find some 
proposals of those quantum analogues \cite{DAlessandro2005,Burgarth2012} 
for nonlinear systems. 
In this paper, we show that minimal passive linear systems having the same 
transfer function (i.e. the equivalent class) are related by \emph{unitary} 
transformations acting on the space of modes. 
Then, based on this result, we characterize a wide class of identifiable 
quantum linear networks, by employing the concept of {\it infection} 
introduced in \cite{Burgarth2009NJP,Burgarth2011NJP}. 
Next, the problem (2) boils down to that of identifying the transfer function, 
which can then be used to reconstruct the parameters of the system; 
in our case, those are the system's (quadratic) Hamiltonian and its coupling 
to the environment, both described by appropriate matrices. 
In this paper, we provide two methods for finding the identifiable parameters 
and physical realizations for a given transfer function.

Beyond identifiability, it is important to investigate and compare the 
\emph{statistical performance} of different estimation methods. 
By employing the well-established quantum estimation theory 
\cite{Holevo,Braunstein&Caves}, in particular the notion of 
\emph{quantum Fisher information}, we investigate the problem of 
devising optimal (time dependent) coherent input states of a given energy, 
and output measurements. 
More precisely, we study the special case of a single-mode, single-input 
single-output (SISO) system in several scenarios with one or two unknown 
parameters. 
Moreover, for the single-mode SISO system, we show that the {\it Heisenberg 
limit} with respect to the input energy can be achieved for a non-classical input 
state. 
Note that, although this enhanced statistical performance could be expected 
from the quantum metrology theory \cite{Giovanetti}, the important new 
concept is that this is the metrology for a {\it dynamical} system, where the 
static phase is now replaced by a dynamical phase represented by the transfer 
function. 
In fact this setup poses some new problems; for instance we need to optimize 
the frequency of the input field, which is not considered in the standard 
quantum metrology dealing with only static parameter estimation problems. 
These new problems can be formulated and solved thanks to 
the unitarity of the transfer function of linear passive systems, which is one 
of the reasons why we are chosen to investigate this class of 
systems separately from more general, active linear systems.

For reader's convenience  we summarize in advance the new concepts appearing in the quantum 
system identification problems studied in this paper, which are not found in 
the conventional identification theory for classical systems. 
The system's input-output relation is represented 
by a transfer function having a special structure, which stems from the joint 
unitary evolution of the system and the field, and the fact that the interaction is passive. 
As consequence, the equivalence classes of parameters with the same output 
can be characterized in terms of unitary, rather than 
a general invertible matrices as is the case for classical systems. 
Note that limiting to a special class of linear systems does not mean 
straightforward applicability of the general identification theory for classical 
systems, but we need to take into account the essential feature of the focused 
system. 
Another specifically quantum aspect of the present theory is that all our 
results apply also to non-classical input states such as a single photon field; 
indeed, the transfer function can be used to describe the input-output relation 
even in such strong quantum scenarios \cite{Guofeng2013}, which is one of 
the advantages of the linear setup.
This fact is important for the following two reasons. 
First, as mentioned in the above paragraph, the enhanced quantum 
system identification is achieved for non-classical input states. 
Second, such a passive linear systems driven by single photons 
behave essentially in the same way as some nonlinear/finite-level systems 
such as a dissipative qubit network driven by a single photon \cite{Yu Pan 2015}; 
hence the theory developed in this paper is applicable to those genuine 
quantum systems beyond linear regime.

The paper is structured as follows. 
In Section~II we introduce the setup of passive linear quantum systems, 
illustrated with realistic examples of system identification problems. 
In Section~III, we give a necessary and sufficient condition for the 
identifiability of a passive linear system, which is then applied to several 
examples. 
Section~IV describes the class of infective networks, which are shown to 
be completely identifiable. 
Section~V provides two concrete identification methods. 
Section~VI is devoted to the statistical analysis of the identification problem, 
using a Fisher information approach for the optimization over input states 
and output measurements. 
In Section~VII, we briefly discuss the case of general (i.e. active) systems, 
pointing out some similarities and differences from the passive case, and 
formulate a conjecture regarding the structure of the equivalence classes.

Throughout the paper we will use the following notations: 
for a matrix $A=(a_{ij})$, the symbols $A^\dagger$ and $A^T$ 
represent its Hermitian conjugate and transpose of $A$, i.e., 
$A^\dagger=(a_{ji}^*)$ and $A^T=(a_{ji})$, respectively. 
For a matrix of operators, $\hat A=(\hat a_{ij})$, we use the 
same notation, in which case $\hat a_{ij}^*$ denotes the adjoint 
to $\hat a_{ij}$. 
$I_n$ denotes the $n\times n$ identity matrix.

A preliminary version of this paper was presented 
at the 52nd IEEE CDC \cite{GutaYamamotoCDC}.


\section{Passive linear quantum systems}

In this section we briefly review the framework of linear classical and quantum 
dynamical systems, with several examples showing the need of 
system identification.

\subsection{Classical linear systems}

A classical linear system is described by the set of differential equations
\[    
    d\vecx(t) = A\vecx(t)dt+B\vecu(t)dt,~~~
    d\vecy(t) = C\vecx(t)dt + D\vecu(t)dt, 
\]
%
%
%
where $\vecx(t)\in \mathbb{R}^{n}$ is the state of the system, 
$\vecu(t)\in \mathbb{R}^{m}$ is an input signal, and 
$\vecy(t)\in \mathbb{R}^{k}$ is the output signal. 
The observer can control the input signal and observe the output, but does 
not have access to the internal state of the system. 
The input signal can be deterministic, in which case we deal with a set of 
ODEs, or stochastic, in which case the equations should be interpreted as SDEs. 
Apart from the input and the initial state of the system, the dynamics is 
determined by the (real) matrices $A,B,C,D$.

To find the relation between input and output it is convenient to work in 
the Laplace domain. 
The Laplace transform of $\vecx(t)$ is defined by 
\begin{equation}\label{eq.laplace.classic}
    {\cal L}[\vecx](s)
      :=\int_0^\infty e^{-st}\vecx(t)dt, 
\end{equation}
where ${\rm Re}(s)>0$. 
Then, we have the explicit input-output relation 
${\cal L}[\vecy](s) =  \Xi(s){\cal L}[\vecu](s)$, where 
\begin{equation}
\label{eq.transfer.function.classic}
    \Xi(s) = C(sI - A)^{-1}B+ D
\end{equation}
is the \emph{transfer function matrix}. 
System identification deals with the problem of estimating the matrices 
$A,B,C,D$ or certain parameters on which they depend, from the knowledge 
of the input and output processes. 
From \eqref{eq.transfer.function.classic} it is clear that the observer can 
at most determine the transfer function $\Xi(s)$ by preparing appropriate 
inputs and observing the output.

The identifiability problem is closely related to the fundamental system 
theory concepts of \emph{controllability} and \emph{observability}. 
The system is controllable if for any states $\vecx_{0}, \vecx_{1}$ and 
times $t_{0}<t_{1}$ there exists a (piece-wise continuous) input $\vecu(t)$ 
such that the initial and final states are given by $\vecx(t_{0})= \vecx_{0}$ 
and $\vecx(t_{1}) = \vecx_{1}$, respectively. 
This is equivalent to the fact that the controllability matrix 
${\cal C}=[B, AB, \ldots, A^{n-1} B]$ has full row rank. 
The system is observable if for any times $t_{0}<t_{1}$, the initial state 
$\vecx(t_{0})= \vecx_{0}$ can be determined from the history of the input and 
output on the time interval $[t_{0},t_{1}]$. 
This is in turn equivalent to the fact that the observability matrix 
${\cal O}=[C^{T}, (CA)^{T}, \ldots, (CA^{n-1})^{T}]^{T}$ has full column rank.

The importance of these concepts for identifiability stems from the fact that 
if the system is \emph{not} controllable or observable then there exists 
a lower dimensional system with the same transfer function as the original 
one. 
The former  can be obtained from the latter by separating its coordinates 
via a canonical procedure called the Kalman decomposition. 
Therefore, in system identification it is natural to restrict the attention 
to \emph{minimal} systems, i.e. systems which are both controllable and 
observable. 
As noted above, by appropriately choosing the input signal $\vecu(t)$, the 
observer can effectively identify the transfer function $\Xi(s)$, while other 
independent parameters in the system matrices are not identifiable in the 
absence of any prior knowledge. 
The following theorem gives a precise characterization of systems which 
are equivalent in the sense that they cannot be distinguished based on 
the input-output history \cite{LjungBook}.

\begin{theorem} 
\label{th.equivalence.class} 
Two minimal systems $(A,B,C,D)$ and $(A^{\prime} , B^{\prime}, C^{\prime}, 
D^{\prime})$ have the same transfer function $\Xi(s)$ if and only if they 
are related by a similarity transformation,  i.e. there exists an invertible 
$n\times n$ matrix $T$ such that
$$
      A^{\prime} = TAT^{-1}, \quad B^{\prime} = TB, 
      \quad C^{\prime} = CT^{-1}, \quad D^{\prime}= D.
$$
\end{theorem}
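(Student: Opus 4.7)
The plan is to treat the two implications separately. The ``if'' direction is a direct computation: substituting the primed matrices into \eqref{eq.transfer.function.classic} gives
$$
 C^{\prime}(sI-A^{\prime})^{-1}B^{\prime}+D^{\prime}=CT^{-1}(sI-TAT^{-1})^{-1}TB+D,
$$
and pulling the factor $T$ through $(sI-TAT^{-1})^{-1}=T(sI-A)^{-1}T^{-1}$ collapses this to $\Xi(s)$. So the similarity transformation preserves the transfer function.

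For the ``only if'' direction I would start by extracting information from the equality of transfer functions. Expanding $\Xi(s)=D+\sum_{k\geq 0}CA^{k}B\, s^{-(k+1)}$ at $s=\infty$ and matching the expansion for the primed system yields $D=D^{\prime}$ together with equality of all Markov parameters $CA^{k}B=C^{\prime}(A^{\prime})^{k}B^{\prime}$ for $k\geq 0$. Assembling these into a block Hankel matrix $H$ with blocks $H_{ij}=CA^{i+j-2}B$, one obtains two factorisations $H=\mathcal{O}\mathcal{C}=\mathcal{O}^{\prime}\mathcal{C}^{\prime}$ of the same matrix, where $\mathcal{O},\mathcal{O}^{\prime}$ and $\mathcal{C},\mathcal{C}^{\prime}$ are the observability and controllability matrices of the two systems.

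The core of the argument is the standard linear-algebra fact that any two rank-$n$ factorisations of the same matrix are related by an invertible $n\times n$ change of basis. Minimality is exactly what enters here: it ensures that $\mathcal{O},\mathcal{O}^{\prime}$ have full column rank $n$ and $\mathcal{C},\mathcal{C}^{\prime}$ have full row rank $n$. Because $\mathcal{C}$ is right-invertible, the column space of $H$ coincides with the column space of $\mathcal{O}$, and likewise with that of $\mathcal{O}^{\prime}$; hence there is a unique matrix $T$ with $\mathcal{O}=\mathcal{O}^{\prime}T$, and $T$ is invertible since both matrices have rank $n$. Substituting into $\mathcal{O}\mathcal{C}=\mathcal{O}^{\prime}\mathcal{C}^{\prime}$ and left-inverting $\mathcal{O}^{\prime}$ yields $T\mathcal{C}=\mathcal{C}^{\prime}$.

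To finish, I would read off the desired identities. The first block-column of $T\mathcal{C}=\mathcal{C}^{\prime}$ gives $TB=B^{\prime}$, and the first block-row of $\mathcal{O}=\mathcal{O}^{\prime}T$ gives $C=C^{\prime}T$, i.e.\ $C^{\prime}=CT^{-1}$. For the similarity of $A$, use the shifted Markov parameters: the identities $CA^{k+1}B=C^{\prime}(A^{\prime})^{k+1}B^{\prime}$ assembled appropriately yield $\mathcal{O}^{\prime}(TAT^{-1}-A^{\prime})\mathcal{C}^{\prime}=0$, whereupon left-invertibility of $\mathcal{O}^{\prime}$ and right-invertibility of $\mathcal{C}^{\prime}$ force $A^{\prime}=TAT^{-1}$. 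The step I expect to require most care is the rank-factorisation uniqueness: one must be honest about the fact that without both controllability and observability the change of basis $T$ may fail to exist or to be invertible, so minimality is not merely convenient but essential.
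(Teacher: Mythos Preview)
The paper does not actually prove this theorem: it is stated in Section~II.A as a classical result from linear systems theory and then invoked without proof (indeed, the proof of Theorem~\ref{equivalent class} opens with ``It is well known in the system control theory that two minimal systems have equal transfer functions \ldots\ if and only if there exists an invertible matrix $T$ satisfying \ldots''). So there is no paper proof to compare against.

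Your argument is correct and is essentially the standard textbook proof via the Hankel matrix and rank factorisation. The ``if'' direction is a one-line computation, as you say. For the ``only if'' direction, the chain of reasoning---matching Markov parameters from the Laurent expansion, factoring the block Hankel matrix as $\mathcal{O}\mathcal{C}=\mathcal{O}'\mathcal{C}'$, using minimality to obtain two full-rank factorisations and hence a unique invertible intertwiner $T$, and then reading off $B'=TB$, $C'=CT^{-1}$ from the first block column/row and $A'=TAT^{-1}$ from the shifted Hankel---is complete and accurate. Your remark that minimality is essential (not merely convenient) for the existence and invertibility of $T$ is exactly right: without full column rank of $\mathcal{O},\mathcal{O}'$ and full row rank of $\mathcal{C},\mathcal{C}'$, the factorisation argument collapses. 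One small cosmetic point: when you write $\mathcal{O}'(TAT^{-1}-A')\mathcal{C}'=0$, it may be slightly cleaner to derive $\mathcal{O}'(TA-A'T)\mathcal{C}=0$ directly from $\mathcal{O}A\mathcal{C}=\mathcal{O}'A'\mathcal{C}'$ together with $\mathcal{O}=\mathcal{O}'T$, but both routes are equivalent once $T$ is known to be invertible.
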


\subsection{Passive linear quantum system}

A general linear quantum system with $n$ continuous variables 
modes is described by the column vectors of creation operators 
$\hat{\veca}^{*}:=[\hat{a}^*_1,\ldots,\hat{a}^*_n]^{T}$ and 
annihilation operators $\hat{\veca}:=[\hat{a}_1,\ldots,\hat{a}_n]^{T}$ 
satisfying the commutation relations 
\begin{equation}
\label{CCR}
     \hat{a}_{i} \hat{a}^{*}_{j} - \hat{a}^{*}_{j}\hat{a}_{i} 
         = [\hat{a}_{i},\hat{a}^{*}_{j}] 
         = \delta_{ij}\hat{1}.
\end{equation}
The system has a quadratic Hamiltonian of the form
\[
    \hat{H} = \hat{\veca}^{\dagger} \Omega \hat{\veca}
     = [\hat{a}^*_1,\ldots,\hat{a}^*_n]
      \left[ \begin{array}{ccc}
        \Omega_{11} & \ldots & \Omega_{1n} \\
        \vdots      &        & \vdots      \\
        \Omega_{n1} & \ldots & \Omega_{nn} \\
      \end{array} \right]
      \left[ \begin{array}{c}
       \hat{a}_1 \\
       \vdots    \\
       \hat{a}_n \\
     \end{array} \right]
\]
%
%
%
with $\Omega$ an $n\times n$ complex Hermitian matrix, and is coupled to 
$m$ bosonic quantum fields 
$\hat{\bf B}(t)= [\hat{B}_{1}(t),\dots, \hat{B}_{m}(t)]^T$ whose 
algebraic properties are characterized by the commutation 
relations 
$$
     [\hat B_{i}(t), \hat B_{j}^{*}(s) ] 
       = \min\{s,t\} \delta_{ij} \hat{1},
$$
or alternatively by 
\begin{equation}
\label{eq.comm.relations}
     [\hat b_i(t), \hat b_j^{*}(s) ] = \delta(t-s) \delta_{ij}\hat{1}.
\end{equation}
where $\hat{\vecb}(t)=[\hat b_1(t),\ldots,\hat b_m(t)]^T$ is the white 
noise operator formally defined as $\hat{\vecb}(t) = d\hat{\bf B}(t)/dt$.

The coupling between system and field is described by the 
following set of operators: 
\begin{equation*}
     \hat{{\bf L}} = C\hat{\veca}
      = \left[ \begin{array}{ccc}
          c_{11} & \ldots & c_{1n} \\
          \vdots &        & \vdots \\
          c_{m1} & \ldots & c_{mn} \\
        \end{array} \right]
        \left[ \begin{array}{c}
          \hat{a}_1 \\
          \vdots    \\
          \hat{a}_n \\
        \end{array} \right],
\end{equation*}
with $c_{ij}$ a complex number. 
More precisely, the joint system-field evolution up to time 
$t$ is given by the unitary operator $\hat U(t)$ satisfying the 
quantum stochastic differential equation (QSDE) \cite{Parthasarathy}
$$
    d\hat{U}(t) 
           = \left( d\hat{\bf B}^{\dagger}(t) \hat{\bf L} 
               - \hat{\bf L}^{\dagger} d\hat{\bf B}(t) 
               + \frac{1}{2} \hat{\veca}^{\dagger} A \hat{\veca} dt 
               \right)\hat{U}(t),
$$
where
\begin{equation}
\label{A matrix}
    A:=-i\Omega-\frac{1}{2}C\dgg C. 
\end{equation}
This type of system is called ``passive", because the operators 
do not involve the creation process such as $\hat{a}^*_i\hat{a}^*_j$ 
in $\hat{H}$ and $\hat{a}^*_i$ in $\hat{\bf L}$, representing a purely 
dissipative evolution.

The Heisenberg evolution of the system operators is 
$\hat{\veca}(t)=\hat{U}(t)^*\hat{\veca} \hat{U}(t)$, which by differentiation 
gives the equation
\begin{equation}
\label{dynamics}
   d \hat{\veca}(t)=A\hat{\veca}(t) dt -C\dgg d \hat{\bf B}(t). 
\end{equation}
Similarly, the output process 
$\hat{\bf B}^{\rm out}(t)=\hat{U}(t)^*\hat{\bf B}(t)\hat{U}(t)$ 
satisfies the differential equation
\begin{equation}
\label{observation}
 d \hat{\bf B}^{\rm out}(t)=C\hat{\veca}(t) dt+d \hat{\bf B}(t). 
\end{equation}

The Laplace transforms of $\hat{\veca}(t)$, $\hat{\vecb}(t) = d\hat{\bf B}(t)/dt$, 
and $\hat{\vecb}^{\rm out}(t) = d\hat{\bf B}^{\rm out}(t)/dt$ are defined as 
in \eqref{eq.laplace.classic}, for ${\rm Re}(s)>0$. 
As we will be assuming that the system is stable, the initial state of the system 
is irrelevant in the long time limit, and we can set its mean to zero 
$\mean{\hat{\veca}(0)} =0$. 
In the Laplace domain the input-output relation is a simple multiplication 
\begin{equation}
\label{eq.input.output}
    {\cal L}[\hat{\vecb}^{\rm out}](s)
      =\Xi(s){\cal L}[\hat{\vecb}](s), 
\end{equation}
where $\Xi(s)$ is the transfer function matrix: 
\begin{equation}
\label{Q transfer function}
    \Xi(s) := I_m - C(sI-A)^{-1}C\dgg. 
\end{equation}
With $s=-i\omega$ we define the frequency domain operators
$$
\hat{\vecb}(\omega): = \mathcal{L}[\hat{\vecb}](-i\omega) = 
\frac{1}{\sqrt{2\pi}} \int_{-\infty}^\infty e^{i\omega t} \hat{\vecb}(t)
$$
so that $\hat{\vecb}^{\rm out}(\omega)=\Xi(-i\omega) \hat{\vecb}(\omega)$. 
Since $\hat{\vecb}^{\rm out}(\omega)$ must satisfy 
canonical commutation relations similar to \eqref{eq.comm.relations}, $\Xi(-i\omega)$ 
must be unitary for all $\omega$ \cite{Gough2008}. 
%


\subsection{Examples of passive linear systems}

\begin{figure}
\centering
\includegraphics[scale=0.37]{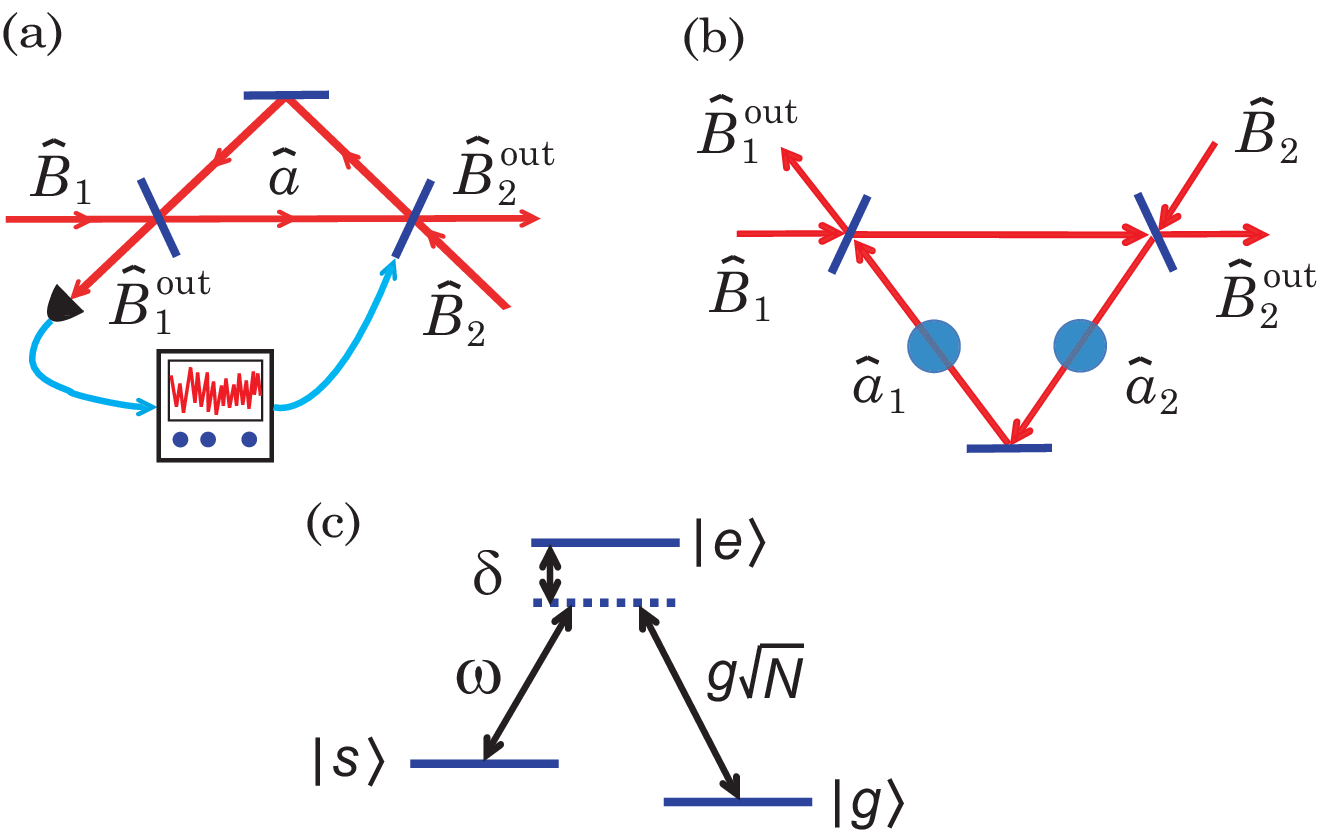}
\caption{
\label{passive examples 1}
Examples of passive linear systems. 
(a) Mode-cleaning cavity; the output field $\hat B_1^{\rm out}$ is measured 
to estimate the detuning $\omega_o$, which is further used to lock the optical 
path length in the cavity. 
(b) Two atomic ensembles; they interact with each other in a nontrivial 
way through the cavity field. 
(c) Energy levels of a $\Lambda$-type atom. 
}
\end{figure}

\begin{example}
\label{cavity example}
The first example is an optical cavity illustrated in 
Fig.~\ref{passive examples 1} (a). 
The intra-cavity field with mode $\hat{a}(t)$ couples to the incoming 
laser field $\hat{B}_1(t)$ and a vacuum $\hat{B}_2(t)$; 
then two outgoing fields $\hat{B}_1^{\rm out}(t)$ and $\hat{B}_2^{\rm out}(t)$ 
appear in the output ports. 
The system dynamics is given by 
\begin{eqnarray}
& & \hspace*{-2em}
     d\hat{a} = (-i\omega_o - \kappa)\hat{a} dt - \sqrt{\kappa}d\hat{B}_1 
                             - \sqrt{\kappa}d\hat{B}_2,
\nonumber \\ & & \hspace*{-2em}
     d\hat{B}_1^{\rm out} = \sqrt{\kappa}\hat{a}dt + d\hat{B}_1,~~
     d\hat{B}_2^{\rm out} = \sqrt{\kappa}\hat{a}dt + d\hat{B}_2, 
\end{eqnarray}
where $\kappa$ is the transmissivity of the coupling mirrors and 
$\omega_o$ is the detuning representing the frequency difference between the 
inner and outer optical fields. 
Note that $C^\dagger=[\sqrt{\kappa}, \sqrt{\kappa}]$ and $\Omega=\omega_o$. 
The role of this cavity system is low-pass filtering for the noisy incoming laser field 
$\hat{B}_1$, and $\hat{B}_2^{\rm out}$ is the resultant mode-cleaned 
field which can be use for quantum information processing \cite{Bachor-2004}. 
To effectively perform mode cleaning, we need to identify the parameter $\omega_o$. 
In practice, the corresponding error signal can be detected by homodyne 
measuring the first output field $\hat{B}_1^{\rm out}$, which is further 
used to lock the cavity path-length to attain $\omega_o=0$ by a piezo-actuator 
mounted on the mirror. Thanks to recent progress in nano-device engineering, it is 
possible to realize high-Q cavities, which can be used for  
storing optical light fields \cite{Noda 2007}. 
\end{example}

\begin{example}
\label{dissipative example}
The next example is that of two large atomic ensembles trapped in a cavity 
(which will be adiabatically eliminated) having 
two input-output ports, as illustrated in Fig.~\ref{passive examples 1}~(b). 
The system variables of the $k$th ensemble $(k=1,2)$ are the total angular 
momentum operators $(\hat{J}_k^x, \hat{J}_k^y, \hat{J}_k^z)$ satisfying 
$[\hat{J}_k^x, \hat{J}_k^y]=i\hat{J}_k^z\sim iJ$~($J\in{\mathbb R}$), 
where the approximation is taken due to the large ensemble limit; 
then, the ``position'' and ``momentum'' operators $\hat q_k=\hat{J}_k^x/\sqrt{J}$, 
$\hat p_k=\hat{J}_k^y/\sqrt{J}$ serve as system variables. 
It was shown in \cite{Parkins,Krauter,Muschik} that a nontrivial coupling 
between the ensembles can be realized, which as a result leads to the following 
dynamical equation: 
\[
      d\hat{\vecx} 
        = -\frac{\kappa}{2}
             \left[ \begin{array}{cc}
                  Y & 0   \\
                  0 & Y   \\
             \end{array} \right]\hat{\vecx}dt
        + i\sqrt{\frac{\kappa}{2}}
             \left[ \begin{array}{cc}
                  -I_2 & I_2   \\
                  iY & iY   \\
             \end{array} \right]
             \left[ \begin{array}{c}
                  d \hat{\bf B}   \\
                  d \hat{\bf B}^*  \\
             \end{array} \right], 
\]
where $\hat{\bf B}=[\hat B_1, \hat B_2]^T$, 
$\hat{\bf B}^*=[\hat B^*_1, \hat B^*_2]^T$, 
\[
      \hat{\vecx}=[\hat q_1, \hat q_2, \hat p_1, \hat p_2]^T,~~
      Y= \left[ \begin{array}{cc}
                  \cosh(2r) & -\sinh(2r)  \\
                  -\sinh(2r) & \cosh(2r)  \\
             \end{array} \right].
\]
and $\kappa$ and $r$ are system parameters. Since $Y>0$, the system is stable and has a unique steady state; 
interestingly, it is the so-called pure {\it two-mode squeezed state} 
\cite{Furusawa-2011}, whose covariance matrix is given by 
$V(\infty) = {\rm diag}\{Y^{-1}/2,~Y/2\}$. 
This implies that the two atomic ensembles are {\it entangled}. 
We emphasize the general fact that, if a linear 
system has a unique pure steady state, then it must be passive 
\cite{Yamamoto2012}. 
Actually, the vector of operators $\hat{\veca}=[\hat a_1, \hat a_2]^T$ 
defined by 
\[
      \hat{\veca}=\frac{1}{\sqrt{2}}[-iY^{1/2},~Y^{-1/2}]\hat{\vecx}
\]
satisfies the CCR  \eqref{CCR} and obeys 
\[
       d\hat{\veca} = -\frac{\kappa}{2}Y\hat{\veca}dt 
           - \sqrt{\kappa}Y^{1/2}d\hat{\bf B},~~
       d\hat{\bf B}^{\rm out}
         = \sqrt{\kappa}Y^{1/2}\hat{\veca} dt + d\hat{\bf B}. 
\]
This is clearly a passive system with $\Omega=0$ and $C=\sqrt{\kappa}Y^{1/2}$. 
(Note that the equation of $\hat{\vecx}$ can be uniquely recovered from that 
of $\hat{\veca}$.) 
Clearly, identifying the parameter $r$ is important, as
it determines the amount of entanglement between the ensembles. 
The same fact holds for the more general case of pure 
{\it Gaussian cluster states}, which may be generated via a passive system 
composed of atomic ensembles \cite{Ficek2009}, can be used for one-way quantum 
computing. 
\end{example}

\begin{example}
\label{memory example}
The last example is that of a medium of 
$N$ $\Lambda$-type atoms trapped in a cavity \cite{Gorshkov}, cf. Fig.~\ref{passive examples 1} (c). 
Each atom has two metastable ground states $\ket{s}$ and $\ket{g}$, and 
an excited state $\ket{e}$. 
The e-g transition is naturally coupled to the cavity mode $\hat a_1$ with 
strength $g\sqrt{N}$, whereas the s-e transition is induced by adding a classical 
magnetic field with time-varying Rabi frequency $\omega(t)$. 
The system's variables are the polarization operator 
$\hat a_2=\hat \sigma_{ge}/\sqrt{N}$ and the spin-wave operator
$\hat a_3=\hat \sigma_{gs}/\sqrt{N}$, where $\hat \sigma_\bullet$ is 
the collective lowering operator. 
As in the previous example, they can be well approximated by annihilation 
operators in the large ensemble limit, and as a result 
$\hat{\veca}=[\hat a_1, \hat a_2, \hat a_3]^T$ obeys the following 
passive system; 
\begin{eqnarray}
& & \hspace*{-2em}
      d\hat{\veca} 
        =   \left[ \begin{array}{ccc}
                  -\kappa & ig\sqrt{N} & 0   \\
                  ig\sqrt{N} & -i\delta & i\omega   \\
                  0 & i\omega^* & 0 \\
             \end{array} \right]\hat{\veca}dt
        - \left[ \begin{array}{c}
                  \sqrt{2\kappa}  \\
                  0 \\
                  0 \\
             \end{array} \right]
                  d \hat B,~~~
\nonumber \\ & & \hspace*{-2em}
      d\hat{B}^{\rm out} = \sqrt{2\kappa}\hat{a}_1dt + d\hat B,
\end{eqnarray}
where $\kappa$ denotes the cavity decay rate and $\delta$ is the detuning 
of the cavity center frequency and the s-e transition frequency. 
This system works as a quantum memory as follows. 
A state of the input optical field $\hat B(t)$ is transferred to that of the 
spin-wave mode $\hat a_3$, and then it is preserved there 
by setting $\omega(t)=0$. 
An effective pulse shaping method for $\omega(t)$ which achieves 
high fidelity state transfer and storage is presented in \cite{Gorshkov}. 
Such an optimal pulse depends on the system's parameters, which 
therefore should be identified as accurately as possible. 
Note that several similar architectures for quantum memory have been 
proposed for instance in an inhomogeneously broadened ensemble of 
atoms or nitrogen-vacancy centers in diamond \cite{Simon,Majer,Hush}, 
nano-mechanical oscillators \cite{He2009}, or a general linear network 
\cite{Yamamoto2014}, all of which are modeled by passive linear systems. 
We should emphasize that the passivity property is essential, as in general 
an active system violates the energy balance and does not realize a perfect 
state transfer. 
\end{example}


\section{The system identifiability}

This section begins with the problem formulation of system identification 
and the definition of identifiability. 
We then provide basic necessary and sufficient conditions for the passive 
linear system \eqref{dynamics} and \eqref{observation} to be identifiable. 
Some examples are given to illustrate the result.


\subsection{System identifiability}

Broadly speaking, by system identification we mean the estimation of the 
parameters $\Omega$ and $C$ which completely characterize the linear 
quantum system \eqref{dynamics} and \eqref{observation}. 
This task can be analyzed in various scenarios, depending on the experimenter's 
ability to prepare the field's input state and the system's initial state, and 
the type of measurements used for extracting information about the dynamics. 
In the simplest experimental scenario the input field is prepared 
in a coherent state with a certain temporal shape
\begin{equation*}
\label{mean input}
    \mean{\hat{\vecb}(t)}=\beta(t),
\end{equation*}
and the experimenter can perform standard (e.g. homodyne and heterodyne) 
measurements on the output. We return to this scenario in section 
\ref{sec.statistics}.

As noted before, in the frequency domain we have 
$\hat{\vecb}^{\rm out}(\omega) = \Xi(-i\omega)\hat{\vecb}(\omega)$, so 
by taking expectation we get 
$\mean{\hat{\vecb}^{\rm out}}(\omega)
= \Xi(-i\omega) \tilde\beta(\omega)$, where $\tilde\beta(\omega)$ is 
the Fourier transform of $\beta(t)$. 
Therefore, the experimenter can at most determine $\Xi(-i\omega)$, and 
this can be done by preparing appropriate inputs (e.g. sinusoids with a 
certain frequency $\omega$), observing the outputs (e.g. by homodyne 
measurements) and computing their Fourier transforms. 
%
%

In general, the system matrices may be modeled as depending on an 
unknown parameter vector $\theta\in \Theta$ such that 
\begin{equation}
\label{eq.model}
      (\Omega, C)= (\Omega(\theta), C(\theta)), 
\end{equation}
and $\Xi(s)= \Xi(s;\theta)$ correspondingly. 
The task is then to estimate $\theta$ using 
the input and output relations (see Fig.~\ref{Sys ID setup}). 
The identifiability of the system is defined as follows. 
\begin{definition}
\label{def of identifiability}
The parameter $\theta$ is identifiable if 
$\Xi(s;\theta)=\Xi(s;\theta^{\prime})$ for all $s$ implies 
$\theta=\theta^{\prime}$. 
\end{definition}


\subsection{Observability, controllability and minimality}

The concepts of  controllability and observability have a straightforward, 
though arguably non unique, extension to the quantum domain; 
see Section~II-A for the classical case. 
The system defined by \eqref{dynamics} and \eqref{observation} is 
controllable if 
the following \emph{controllability matrix} has full row rank: 
\begin{equation}
\label{controllability matrix}
   {\cal C}=-[C^{\dagger}, AC^{\dagger}, \ldots, A^{n-1} C^{\dagger}].
\end{equation}
Similarly, the system is observable if 
the \emph{observability matrix} 
\begin{equation}
\label{observability matrix}
   {\cal O}=[C^{T}, (CA)^{T}, \ldots, (CA^{n-1})^{T}]^{T}
\end{equation}
has full column rank. 
As in the classical case, if the system is not controllable or observable then 
there exists a lower dimensional system with the same transfer function 
as the original one. 
Thus, we focus on {\it minimal}, i.e. controllable and observable quantum 
systems. 
The following lemma shows that in the passive case we need to check only 
one of the controllability and observability conditions to verify that the 
system is minimal and stable. 
%
\begin{lemma}
\label{lemma 3.1}
For the quantum passive linear system \eqref{dynamics} and 
\eqref{observation}, the controllability and the observability 
conditions are equivalent. 
Moreover, any minimal system is stable, i.e. $A$ is Hurwitz. 
\end{lemma}
\begin{proof}
From the result of systems theory \cite{LjungBook}, $(A,C\dgg)$ controllability 
is equivalent to the following condition: 
$yA=\lambda y,~\exists y, \lambda~\Rightarrow~yC\dgg\neq 0$. 
Then we have 
\begin{equation}
\label{prop1}
    zA\dgg=\mu z,~\exists z, \mu~~\Rightarrow~~zC\dgg\neq 0. 
\end{equation}
To prove \eqref{prop1}, suppose that there exists a vector $z$ satisfying 
$zA\dgg=\mu z$ and $zC\dgg=0$. 
This leads to $z\Omega=-i\mu z$ and $zC\dgg C=0$, yielding 
$zA=z(-i\Omega-C\dgg C/2)=-\mu z$. 
But together with $zC\dgg=0$, this is contradiction to the condition posed 
in the first line, thus \eqref{prop1} holds. 
Now again from the systems theory, \eqref{prop1} is the iff condition for 
$(A\dgg, C\dgg)$ controllability and it is equivalent to $(A,C)$ observability. 
The proof for the inverse direction is the same.

Let us move to prove the stability property. 
Because of the minimality, the system satisfies the condition \eqref{prop1}; 
hence $z^\dagger$ is an eigenvector of $A$ and $\mu^*$ is the 
corresponding eigenvalue. 
Then the relation $zA^\dagger z^\dagger = \mu \|z^\dagger\|^2$ 
together with its complex conjugate lead to 
${\rm Re}(\mu)=-\|Cz^\dagger\|^2/2\|z^\dagger\|^2$, which is 
strictly negative due to $zC^\dagger\neq 0$. 
Therefore $A$ is a Hurwitz matrix. 
\end{proof}


\subsection{The identifiability conditions}

As noted above, by appropriately choosing the input signal $\beta(t)$, 
the observer can effectively identify the transfer function $\Xi(s)$. 
The following theorem gives a precise characterization of systems which 
are equivalent in the sense that they cannot be distinguished based on only 
the input-output relation.

\begin{theorem}
\label{equivalent class} 
Let $(\Omega_1, C_1)$ and $(\Omega_2, C_2)$ be two passive linear 
systems as defined in \eqref{dynamics} and \eqref{observation}, 
and assume that both systems are minimal. 
Then they have the same transfer function if and only if 
there exists a unitary matrix $U$ such that
\begin{equation}
\label{equivalent class transfer}
    \Omega_2=U\Omega_1U\dgg,~~~
    C_2=C_1U\dgg. 
\end{equation}
\end{theorem}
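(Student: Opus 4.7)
The plan is to mimic the classical result (Theorem \ref{th.equivalence.class}) and then use the extra structure $A = -i\Omega - \tfrac{1}{2}C^\dagger C$ together with minimality to upgrade the similarity transformation into a unitary one.

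First, I would handle the easy ``if'' direction by direct substitution: assuming $\Omega_2 = T\Omega_1 T^\dagger$ and $C_2 = C_1 T^\dagger$ with $T$ unitary, one checks that $A_2 = -i\Omega_2 - \tfrac{1}{2}C_2^\dagger C_2 = T A_1 T^\dagger = T A_1 T^{-1}$ and $C_2^\dagger = T C_1^\dagger$, whence plugging into \eqref{Q transfer function} and using $T^\dagger T = I$ collapses $\Xi_2(s)$ to $\Xi_1(s)$.

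For the ``only if'' direction, I would start by viewing the passive system as a classical linear system with matrices $(A, -C^\dagger, C, I)$ and invoke Theorem \ref{th.equivalence.class}: equal transfer functions together with minimality give an invertible $T$ satisfying $A_2 = T A_1 T^{-1}$, $C_2^\dagger = T C_1^\dagger$, and $C_2 = C_1 T^{-1}$. Comparing the last identity (after taking adjoints) with the second gives $(T^\dagger)^{-1} C_1^\dagger = T C_1^\dagger$, i.e. $P C_1^\dagger = 0$ where $P := T^\dagger T - I$. The goal is to show $P=0$.

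The key algebraic input is the passive structure $A_1 + A_1^\dagger = -C_1^\dagger C_1$. Combining $A_2 + A_2^\dagger = -C_2^\dagger C_2$ with $A_2 = TA_1 T^{-1}$ and $C_2^\dagger C_2 = T C_1^\dagger C_1 T^{-1}$ (after using $P C_1^\dagger = 0$), a short manipulation yields $A_1^\dagger P = P A_1^\dagger$, which combined with $A_1^\dagger = -A_1 - C_1^\dagger C_1$ and $P C_1^\dagger = 0$ rearranges into the commutation-with-correction identity
\[
    P A_1 = A_1 P + C_1^\dagger C_1 P.
\]
I would then prove by induction on $k$ that $P A_1^k C_1^\dagger = 0$: the base case is $P C_1^\dagger = 0$, and the inductive step uses the displayed identity to push $A_1$ through $P$, both resulting terms vanishing by the inductive hypothesis. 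Controllability (which, by the preceding Lemma, coincides with observability and hence with minimality) ensures that $[C_1^\dagger, A_1 C_1^\dagger, \ldots, A_1^{n-1} C_1^\dagger]$ has full row rank, forcing $P = 0$ and therefore $T$ unitary.

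The main obstacle I anticipate is isolating the correct commutation identity linking $P$ and $A_1$ so that the induction closes: the first natural attempt gives $P A_1^\dagger = A_1^\dagger P$, which is the ``wrong'' side for controllability expressed via $A_1 C_1^\dagger$. Substituting $A_1^\dagger = -A_1 - C_1^\dagger C_1$ is the clean way to convert it, and the vanishing of $P C_1^\dagger$ is precisely what makes the correction term harmless in the induction. Once $T$ is unitary, the identity $\Omega_2 = T \Omega_1 T^\dagger$ follows by substituting $C_2 = C_1 T^\dagger$ into $A_2 = T A_1 T^\dagger$ and comparing the Hermitian (i.e.\ $\Omega$) and anti-Hermitian (i.e.\ $C^\dagger C$) parts of both sides.
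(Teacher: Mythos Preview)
Your proof is correct and follows essentially the same strategy as the paper's: invoke the classical similarity result to get an invertible $T$, exploit the passive structure to obtain a commutation relation between $T^\dagger T$ and $A_1$, and conclude $T^\dagger T = I$ from minimality. The paper reaches $[T^\dagger T, A_1]=0$ by separately establishing $[T^\dagger T, \Omega_1]=0$ (via Hermiticity of $\Omega_2=T\Omega_1T^{-1}$) and $[T^\dagger T, C_1^\dagger C_1]=0$, and then uses the observability matrix instead of controllability; note also that your anticipated obstacle is illusory, since $P=T^\dagger T-I$ is Hermitian, so $[P,A_1^\dagger]=0$ already gives $[P,A_1]=0$ by taking adjoints, and the correction term $C_1^\dagger C_1 P = C_1^\dagger(C_1 P)=0$ in your displayed identity vanishes outright.
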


\begin{proof}
It is well known that two minimal systems have the same transfer functions 
\[
    C_1(sI-A_1)^{-1}C_1\dgg=C_2(sI-A_2)^{-1}C_2\dgg, 
\]
(we here omit the trivial constant term $I$) iff there 
exists an invertible matrix $U$ satisfying 
\begin{equation}
\label{similarity trans}
    A_2=UA_1U^{-1},~~C_2\dgg=UC_1\dgg,~~C_2=C_1U^{-1}. 
\end{equation}
Note that $U$ is not assumed to be unitary. 
Using the second and third conditions we have $C_1(U\dgg U)=C_1$, 
which further gives $[U\dgg U,~C_1\dgg C_1]=0$. 
Also, applying the second and third conditions to the first one, we 
have $\Omega_2=U\Omega_1 U^{-1}$. 
Then, because $\Omega_i$ is a Hermitian matrix, $[U\dgg U,~\Omega_1]=0$ 
holds. 
Combining these two results we obtain $[U\dgg U,~A_1]=0$. Therefore we have 
\[
    C_1 A_1=C_1(U\dgg U)A_1=C_1A_1(U\dgg U), 
\]
which means that the observability matrix ${\cal O}$ satisfies 
${\cal O}={\cal O}U\dgg U$. 
Because of the assumption that ${\cal O}$ is of full rank, $U$ is unitary. 
Therefore the conditions \eqref{similarity trans} are reduced to 
\eqref{equivalent class transfer}. 
\end{proof}

For a parameterized model the identifiability condition is given by the following.

\begin{corollary}
\label{similarity transformation}
Let $(\Omega(\theta), C(\theta))$ be a minimal system with unknown 
parameter vector $\theta\in \Theta$. 
Then $\theta$ is identifiable if and only if 
\begin{equation*}
    \Omega(\theta^{\prime})=U\Omega(\theta)U\dgg,~~~
    C(\theta^{\prime})=C(\theta)U^\dagger
\end{equation*}
implies $\theta=\theta^{\prime}$. 
\end{corollary}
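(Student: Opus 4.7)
The plan is to derive the corollary as an essentially immediate consequence of Theorem \ref{equivalent class} combined with Definition \ref{def of identifiability}. No fresh machinery is needed; the work is in carefully unwinding the quantifiers.

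First I would unfold the definition of identifiability: $\theta$ is identifiable precisely when, for every $\theta'\in\Theta$, the equality $\Xi(s;\theta)=\Xi(s;\theta')$ for all $s$ forces $\theta=\theta'$. Since both $(\Omega(\theta),C(\theta))$ and $(\Omega(\theta'),C(\theta'))$ are minimal by hypothesis, I can apply Theorem \ref{equivalent class} to the pair at parameters $\theta$ and $\theta'$. The theorem replaces the transfer-function equality by the concrete algebraic condition that there exists a unitary $T$ (depending on $\theta,\theta'$) such that $\Omega(\theta')=T\Omega(\theta)T\dgg$ and $C(\theta')=C(\theta)T\dgg$.

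Substituting this equivalence back into the identifiability statement yields the corollary directly: $\theta$ is identifiable if and only if, whenever these two algebraic relations hold for some unitary $T$, we must have $\theta'=\theta$. I would present this as two short implications (necessity and sufficiency) just to make the logical structure transparent, but both directions use the same biconditional from Theorem \ref{equivalent class}.

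The only subtlety worth flagging is that Theorem \ref{equivalent class} requires \emph{both} competing systems to be minimal, so one should note that the corollary is implicitly asserting minimality of $(\Omega(\theta'), C(\theta'))$ for every $\theta'\in\Theta$; this is the content of calling the model itself ``minimal.'' There is no genuine obstacle here: the proof is one paragraph of direct substitution once Theorem \ref{equivalent class} is in hand.
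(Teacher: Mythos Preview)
Your proposal is correct and matches the paper's approach: the paper states the corollary immediately after Theorem~\ref{equivalent class} without any separate proof, treating it as a direct consequence of that theorem together with Definition~\ref{def of identifiability}. Your unwinding of the quantifiers, and your remark that minimality must hold for every $\theta'\in\Theta$, simply makes explicit what the paper leaves implicit.
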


The above result can be interpreted as follows. 
The matrix $U$ corresponds to the coordinate transformation 
$\hat{\veca}'=U\hat{\veca}$ and the unitarity of $U$ means that the 
canonical commutation relation \eqref{CCR} is preserved. 
Note that if the system variables contain classical components, $U$ would not 
necessarily be unitary. 
Similarly, if the system is not passive, then one needs to consider both 
$\hat{\veca}$ and $\hat{\veca}^{*}$ as coordinates, and corresponding 
doubled-up transfer matrices \cite{GoughPRA2010}.

In addition to the above corollary, we give another criterion for 
testing the identifiability. 
Note this result does not require the minimality of the system.

\begin{lemma}
\label{simple test}
The parameter $\theta$ is identifiable if and only if 
\begin{equation}
\label{markov parameters}
    C(\theta)\Omega(\theta)^kC(\theta)\dgg
     =C(\theta^{\prime})\Omega(\theta^{\prime})^kC(\theta^{\prime})\dgg,
     ~~~\forall k
\end{equation}
implies $\theta=\theta^{\prime}$. 
\end{lemma}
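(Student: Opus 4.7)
The plan is to establish an intermediate equivalence that knowledge of the transfer function $\Xi(s;\theta)$ as a function of $s$ is the same as knowledge of the sequence of ``Hamiltonian moments'' $C(\theta)\Omega(\theta)^k C(\theta)\dgg$ for all $k\geq 0$; the lemma then follows immediately from Definition \ref{def of identifiability}. Crucially this equivalence is a purely algebraic identity on rational matrix functions and will not invoke minimality, matching the remark that accompanies the statement.

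First I would rewrite the transfer function so that the coupling matrix $C$ and the Hamiltonian $\Omega$ are algebraically disentangled. Substituting $A = -i\Omega - \tfrac{1}{2}C\dgg C$ into \eqref{Q transfer function} and applying the Woodbury matrix identity to $(sI + i\Omega + \tfrac{1}{2}C\dgg C)^{-1}$, with $\tfrac{1}{2}C\dgg C$ treated as a low-rank perturbation of $sI + i\Omega$, one can reduce the transfer function, after exploiting that $F(s) := C(sI + i\Omega)^{-1}C\dgg$ commutes with rational functions of itself, to the Cayley-type expression
\begin{equation*}
    \Xi(s) = \bigl(I - \tfrac{1}{2}F(s)\bigr)\bigl(I + \tfrac{1}{2}F(s)\bigr)^{-1}.
\end{equation*}

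Next I would algebraically invert this relation to obtain $F(s) = 2\bigl(I + \Xi(s)\bigr)^{-1}\bigl(I - \Xi(s)\bigr)$. Since $F(s)\to 0$ as $|s|\to\infty$, both $(I + \tfrac{1}{2}F(s))^{-1}$ and $(I + \Xi(s))^{-1}$ exist in a neighbourhood of infinity, and by rationality in $s$ the identities extend to all $s$ outside a finite set of poles. Hence $\Xi(\cdot;\theta)$ and $F(\cdot;\theta)$ determine each other. I would then expand $F$ as a Laurent series about $s = \infty$,
\begin{equation*}
    F(s) = \sum_{k=0}^\infty (-i)^k \, \frac{C\Omega^k C\dgg}{s^{k+1}},
\end{equation*}
which shows that knowing $F(\cdot)$ as a rational matrix function is equivalent to knowing the full sequence $\{C\Omega^k C\dgg\}_{k\geq 0}$. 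Chaining the two equivalences yields exactly $\Xi(\cdot;\theta) \equiv \Xi(\cdot;\theta')$ iff \eqref{markov parameters} holds for all $k$, which proves the lemma.

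The main obstacle is really just the Woodbury bookkeeping in the first step: arriving at the clean Cayley form requires carefully handling non-commutativity until $F$ is isolated, after which the fact that $F$ commutes with itself collapses the expression. Once the Cayley form is in hand, inverting it and expanding in powers of $1/s$ are routine. Invertibility of the various matrices is a minor concern, but it is automatic in a neighbourhood of $s=\infty$ and extends by rationality away from a finite pole set.
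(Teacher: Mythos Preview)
Your proof is correct and takes a genuinely different route from the paper's. The paper argues directly at the level of Markov parameters: expanding $\Xi(s;\theta)=\Xi(s;\theta')$ in powers of $1/s$ yields $CA^kC\dgg = C'A'^{\,k}C'^{\dagger}$ for all $k$, and since $A=-i\Omega-\tfrac{1}{2}C\dgg C$, each $CA^kC\dgg$ is a polynomial in the quantities $C\Omega^jC\dgg$ for $j\leq k$ (any interior factor $C\dgg C$ splits the word into a product of shorter moments), so an induction on $k$ recovers $C\Omega^kC\dgg$. You instead exploit the Cayley structure: the Woodbury identity collapses the transfer function to $\Xi=(I-\tfrac{1}{2}F)(I+\tfrac{1}{2}F)^{-1}$ with $F(s)=C(sI+i\Omega)^{-1}C\dgg$, after which $F$ is recovered by the inverse Cayley map and the moments are read off from its Laurent expansion at infinity. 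The paper's argument is more elementary---no matrix-inversion identities, just polynomial bookkeeping---but the induction step is only sketched; your argument is conceptually cleaner and isolates a structural fact (the Cayley relation between $\Xi$ and the ``Hamiltonian resolvent'' $F$) that has independent interest and would likely generalise more transparently to related settings.
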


\begin{proof}
For simplicity let us denote $C:= C(\theta), C^{\prime}:= C(\theta^{\prime})$ 
and similarly for $\Omega$ and  $A$.
By expanding the equation $\Xi(s;\theta)=\Xi(s;\theta^{\prime})$ 
with respect to $s$ and comparing their coefficients, we have 
$CA^k C \dgg=C^{\prime}A^{\prime k} C^{\prime \dagger}$ for all $ k$, 
and thus 
\[
   C\big(-i\Omega-\half C\dgg C\big)^k C\dgg=
      C^{\prime}\big(-i\Omega^{\prime} 
          - \half C^{\prime\dagger} C^{\prime}\big)^k C^{\prime\dagger}. 
\]
This $k$-th order polynomial is composed of the linear combination of 
$C[(C\dgg C)^p\circ\Omega^q]C\dgg$ with $p+q=k$, where $\circ$ 
means the symmetrization, e.g. 
$(C\dgg C)^1\circ\Omega^2 = (C\dgg C)\Omega^2 
+\Omega(C\dgg C)\Omega +\Omega^2(C\dgg C)^2$ for $k=3$. 
Then \eqref{markov parameters} can be proven by induction with 
respect to $k$. 
\end{proof}


\subsection{Examples}

We here apply the identifiability conditions to some systems. 
The critical assumption is that we have some a priori information about 
the system, such as the structure of the network and some parameters. 
This a priori knowledge helps us to reduce the size of the equivalence 
class of the system and in some cases even to exactly identify the 
system, as will be demonstrated.

\begin{example}
We begin with the simple cavity system studied in Example \ref{cavity example}. 
In this case, $\Omega=\omega_o$ and 
$C^\dagger=[\sqrt{\kappa}, \sqrt{\kappa}]^T$, where we assume that 
$\kappa$ is a known parameter. 
Now, from Theorem~\ref{equivalent class}, the equivalence class is generated 
by a trivial $1\times 1$ unitary matrix $U=e^{i\phi}$; 
but clearly $C=CU^\dagger$ imposes $U=1$, hence 
from Corollary~\ref{similarity transformation} $\omega_o$ is identifiable. 
\end{example}

\begin{example}
Next let us consider the system in Example~\ref{dissipative example}, 
where $\Omega=0$ and $C=\sqrt{\kappa}Y^{1/2}$. 
It is easy to see that the system is minimal. 
Then Theorem~\ref{equivalent class} states that the equivalence class is 
generated by a unitary matrix $U$ as 
\[
     \Omega'=0,~~~
     C'=CU^\dagger
        =\sqrt{\kappa}
          \left[ \begin{array}{cc}
                  \cosh(r) & -\sinh(r)  \\
                  -\sinh(r) & \cosh(r)  \\
             \end{array} \right] U^\dagger. 
\]
Now, we know that $C'$ is positive symmetric and the (1,1) and (2,2) elements 
are the same; 
this a priori knowledge allows only $U=I_2$, so the parameters are identifiable. 
\end{example}

\begin{example}
The memory system shown in Example~\ref{memory example} is a passive 
system essentially with 
\begin{equation}
\label{C matrix for 3 nodes}
    C=[\sqrt{2\kappa},~0,~ 0],~~
    \Omega(\theta)
    =\left[ \begin{array}{ccc}
       0 & \theta_1 & 0        \\
       \theta_1 & 0 & \theta_2 \\
       0 & \theta_2 & 0        \\
     \end{array} \right], 
\end{equation}
where $(\theta_{1}, \theta_{2})$ are unknown coupling constants 
to be identified (we assume $\delta=0$).

We immediately see that the system is controllable and accordingly minimal. 
Thus, we can apply Theorem~\ref{equivalent class}, showing that the 
equivalence class of the system is generated by the unitary matrix $U$. 
But since we know the structure of the matrices $\Omega$ and $C$, 
it follows that $U$ must be either $U_1= {\rm Diag}(1,1,1)$, 
$U_2= {\rm Diag}(1,-1,1)$, $U_3={\rm Diag}(1,1,-1)$, or 
$U_4= {\rm Diag}(1,-1,-1)$. 
This means that the systems with parameter $\theta=(\theta_1,\theta_2)$, 
$(-\theta_1, \theta_2), (\theta_1,-\theta_2)$, and 
$(-\theta_1,-\theta_2)$ have the same transfer function. 
Therefore the parameters $\theta_1$ and $\theta_2$ are identifiable 
up to the sign, i.e. $\theta$ is locally identifiable but {\it not globally} 
\cite{Glover1974}.

An alternative proof of the above result is obtained by using 
Lemma~\ref{simple test}. 
Actually we compute 
%
%
%
\begin{eqnarray}
& & \hspace*{-1em}
    C\Omega(\theta) C\dgg = 0,~~~
    C\Omega(\theta)^{2} C\dgg = 2\kappa\theta_1^2,
\nonumber \\ & & \hspace*{-1em}
    C\Omega(\theta)^3 C\dgg = 0,~~~
    C\Omega(\theta)^4 C\dgg = 2\kappa\theta_1^2(\theta_1^2+\theta_2^2)
\nonumber
\end{eqnarray}
yielding $\theta_1^{2}=\theta_1^{\prime 2}$ and 
$\theta_2^{2}=\theta_2^{\prime 2}$ hold, if $\theta_1\neq 0$. 
Thus we have the same conclusion as above.

A third route is to look directly at the transfer function: 
\[
    \Xi(s)
     =\frac{s^3-\kappa s^2+(\theta_1^2+\theta_2^2)s-\kappa\theta_2^2}
           {s^3+\kappa s^2+(\theta_1^2+\theta_2^2)s+\kappa\theta_2^2},
\]
and note that the poles give us enough information to determine both 
$\theta_1^{2}$ and $\theta_2^{2}$. 
Note when $\theta_1=0$ (i.e., there is no connection between $\hat a_1$ 
and $\hat a_2$), $\Xi(s)=(s-\kappa)/(s+\kappa)$, showing that the system 
is clearly not minimal; 
actually in this case $\theta_2$ cannot be estimated. 
\end{example}

\begin{figure}
\centering
\includegraphics[scale=0.35]{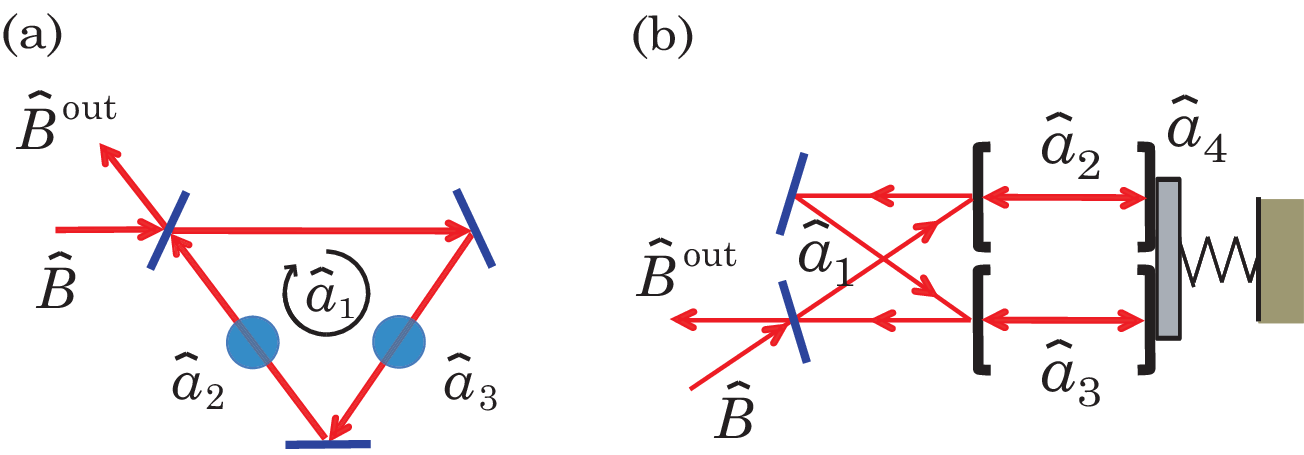}
\caption{\label{Passive Examples 2}
Examples of passive linear systems. 
(a) Two atomic ensembles where in this case the cavity field with mode 
$\hat a_1$ is not adiabatically eliminated. 
(b) Opto-mechanical oscillator with phonon mode $\hat a_4$, which is 
coupled to two cavities with modes $(\hat a_2, \hat a_3)$; they are 
further coupled to a bow-tie type cavity with mode $\hat a_1$, which 
works as an input-output port. 
}
\end{figure}

\begin{example}
Let us consider the large atomic ensemble network depicted in 
Fig.~\ref{Passive Examples 2}~(a). 
The cavity field $\hat a_1$ is coupled to the input field and is connected 
to the ensembles with modes $\hat a_2$ and $\hat a_3$ which correspond 
to the collective lowering operators of the ensembles \cite{Parkins}. 
The system Hamiltonian is given by 
$\hat{H} = \Delta \hat{a}_2^*\hat{a}_2
             +\theta_1(\hat{a}_1^*\hat{a}_2+\hat{a}_1\hat{a}_2^*)
             +\theta_2(\hat{a}_1^*\hat{a}_3+\hat{a}_1\hat{a}_3^*)$, 
hence we have 
\begin{equation*}
    \Omega(\theta)
    =\left[ \begin{array}{ccc}
       0 & \theta_1 & \theta_2 \\
       \theta_1 & \Delta & 0        \\
       \theta_2 & 0 & 0        \\
     \end{array} \right]. 
\end{equation*}
The $C$ matrix is the same as in \eqref{C matrix for 3 nodes}.

The additional detuning Hamiltonian $\Delta \hat{a}_2^*\hat{a}_2$ is 
necessary for the parameters $\theta_1$ and $\theta_2$ to be 
identifiable, because the system is minimal only when $\Delta\neq 0$. 
In fact, when $\Delta=0$ we cannot distinguish the two ensembles, 
thus the system is not identifiable. 
So we assume $\Delta\neq 0$ and apply Theorem~\ref{equivalent 
class}. 
The constraint $C=CU^\dagger$ implies that $U$ must be of the form 
$U={\rm Diag}(1, \tilde{U})$ with $\tilde{U}$ a $2\times 2$ unitary matrix. 
Then the equivalence class is characterized by 
\[
    \Omega'
    =\left[ \begin{array}{cc}
       1 & 0^T \\
       0 & \tilde{U}   \\
     \end{array} \right]
     \left[ \begin{array}{cc}
       0 & \theta^T \\
       \theta & \Lambda \\
     \end{array} \right]
     \left[ \begin{array}{cc}
       1 & 0^T \\
       0 & \tilde{U}^\dagger \\
     \end{array} \right]
    =\left[ \begin{array}{cc}
       0 & (\tilde{U}\theta)^\dagger \\
       \tilde{U}\theta & \tilde{U}\Lambda \tilde{U}^\dagger \\
     \end{array} \right], 
\]
where $\theta=[\theta_1, \theta_2]^T$ and 
$\Lambda={\rm Diag}(\Delta, 0)$. 
Now we know that the matrix $\Omega'$ is of the same 
form as $\Omega$, which yields additional constraint on $\tilde{U}$, 
i.e. $\tilde{U}\Lambda \tilde{U}^\dagger=\Lambda$, or equivalently 
$[\tilde{U}, \Lambda]=0$. 
This readily clarifies that $\tilde{U}$ is diagonal; 
hence together with $\tilde{U}\theta\in{\mathbb R}^2$, we conclude that 
the parameters $\theta_{1}$ and $\theta_{2}$ are identifiable up to 
the sign.
\end{example}

\begin{example}
The last example is a linear network composed of cavities and an 
opto-mechanical oscillator shown in Fig.~\ref{Passive Examples 2}~(b). 
This specific configuration is inspired by \cite{Hamerly2013} and the 
oscillator can serve as a quantum memory. 
The oscillator with phonon mode $\hat a_4$ couples to two cavities 
with modes $(\hat a_2, \hat a_3)$, through radiation pressure force; 
particularly with the dissipative (red-sideband) regime the coupling 
Hamiltonian takes a passive form \cite{MilburnWoolley}. 
The two cavities further interact with a bow-tie type cavity with mode 
$\hat a_1$. 
As a result, the system Hamiltonian is given by 
%
%
%
\begin{eqnarray}
& & \hspace*{-1em}
      \hat{H} = \theta_1(\hat{a}_1^*\hat{a}_2+\hat{a}_1\hat{a}_2^*)
              +\theta_2(\hat{a}_1^*\hat{a}_3+\hat{a}_1\hat{a}_3^*)
\nonumber \\ & & \hspace*{1em}
      \mbox{} +\theta_3(\hat{a}_2^*\hat{a}_4+\hat{a}_2\hat{a}_4^*)
              +\theta_4(\hat{a}_3^*\hat{a}_4+\hat{a}_3\hat{a}_4^*), 
\nonumber
\end{eqnarray}
thus we have 
\begin{equation*}
\label{example 2 matrix Omega}
    \Omega(\theta)
    =\left[ \begin{array}{cccc}
       0            & \theta_1 & \theta_2 & 0        \\
       \theta_1     & 0        & 0        & \theta_3 \\
       \theta_2     & 0        & 0        & \theta_4 \\
       0            & \theta_3 & \theta_4 & 0        \\
     \end{array} \right], 
\end{equation*}
while the $C$ matrix is given by $C=[\sqrt{2\kappa},~0,~ 0, ~0]$.

Let us first check the minimality. 
A direct computation shows that the observability matrix 
${\cal O}$ satisfies 
${\rm det}({\cal O})
=4\kappa^2(\theta_1\theta_3+\theta_2\theta_4)^2
          (\theta_2\theta_3-\theta_1\theta_4)$. 
Hence, we consider the minimal system satisfying 
${\rm det}({\cal O})\neq 0$. 
Then from Theorem~\ref{equivalent class}, the equivalence class 
is generated in terms of the unitary $U={\rm Diag}(1, \tilde{U})$ with 
$\tilde{U}$ a $3\times 3$ unitary matrix, and it is parameterized by 
\[
    \Omega'
    =\left[ \begin{array}{cc}
       0 & [\theta_{12}^T~~0]\tilde{U}^\dagger \\
       \tilde{U} \left[ \begin{array}{c}
          \theta_{12} \\
          0 \\
         \end{array} \right]
        & \tilde{U}\Theta \tilde{U}^\dagger \\
     \end{array} \right],~~
    \Theta=\left[ \begin{array}{cc}
             0 & \theta_{34} \\
             \theta_{34}^T & 0 \\
     \end{array} \right], 
\]
where $\theta_{12}=[\theta_1, \theta_2]^T$, 
$\theta_{34}=[\theta_3, \theta_4]^T$. 
The structure of the matrix $\Omega'$ further imposes the additional 
constraint on $\tilde{U}$, which as a result yields $\tilde{U}={\rm Diag}(V, 1)$ 
with $V$ a $2\times 2$ orthogonal matrix. 
Therefore, the equivalence class is the system whose Hamiltonian 
matrix is characterized by 
\[
   \Omega'
    =\left[ \begin{array}{ccc}
       0 & \theta_{12}^T V^T & 0 \\
       V\theta_{12} & O & V\theta_{34} \\
       0 & \theta_{34}^T V^T & 0 \\
     \end{array} \right]. 
\]
Hence, from Theorem~\ref{equivalent class}, the systems specified by 
$(\Omega', C)$ have the same transfer function for all $V$. 
Thus, this system is not (completely) identifiable. 
However, if for instance the second cavity mode $\hat a_2$ is detuned and 
as consequence the (2,2) element of $\Omega$ is nonzero, then the system 
gains the identifiability property. 
\end{example}


\section{Network identification; the infection condition}

As demonstrated in Section III, in order to establish the identifiability 
of a given system, we need to carry out certain model specific calculations 
ruling out the existence of non-trivial unitaries in Theorem 3.1. 
It would therefore be useful to find an identifiability criterion which applies 
to a general class of systems. 
In this section we describe such a criterion which relies on the special 
topological structure of the Hamiltonian. 
Similar results have been found in different contexts 
\cite{Burgarth2009NJP,Burgarth2011NJP}.

Let $\mathcal{V}$ be the set of vertices representing the modes 
of our continuous variables system. 
The interactions between the different modes are modeled by 
the set of edges $\mathcal{E}$ over $\mathcal{V}$: 
$\mathcal{E} \subset \mathcal{V}\times \mathcal{V}$, so that two 
modes $i$ and $j$ interact if they are connected by an edge. 
More precisely, we assume that the matrix $\Omega$ describing 
the system Hamiltonian is of the form
\begin{equation}
\label{eq.omega.graph}
  \Omega(\theta)= 
     \sum_{(i,j)\in \mathcal{E}} \omega_{i,j}(\theta) 
        ( e_{i} e_{j}^T + e_{j} e_{i}^T ),
\end{equation}
where $\omega_{i,j}(\theta)$ are unknown \emph{real} coefficients 
which make up the parameter $\theta$ and $e_i=[0, \cdots, 1, \cdots, 0]^T$ 
is the basis vector having zeros except the $i$th element. 
We further assume that the coupling between the system and the 
field is known and specified by the matrix $C$ whose support is 
spanned by a set of basis vectors $\{ e_{i} : i \in\mathcal{I}\}$ 
for some set of vertices $\mathcal{I}$, the restriction of $C^\dagger C$ to 
this subspace being strictly positive.

\begin{figure}
\centering
\includegraphics[scale=0.3]{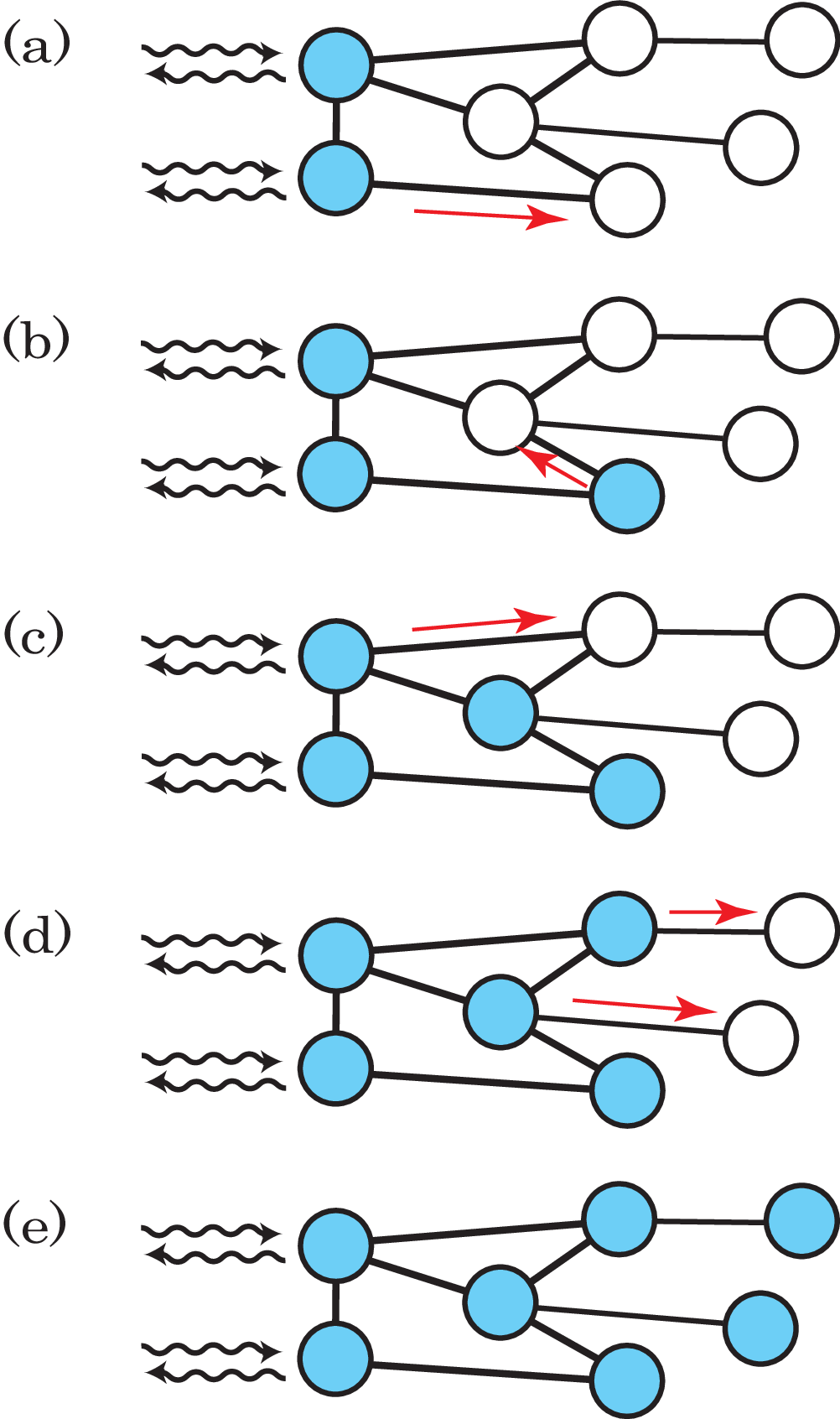}
\caption{
\label{Infection graph}
Infection property. 
The colored node indicates that it is infected, and the arrow indicates 
that the infection occurs along that edge. 
Through the steps from (a) to (e), the whole network becomes infected. 
}
\end{figure}

The crucial property we will require of $\mathcal{I}$ is that 
it is {\it infecting} for the graph $(\mathcal{V}, \mathcal{E})$, 
which can be defined sequentially by the following conditions 
(see Fig.~\ref{Infection graph}): 
\begin{itemize}
\item[(i)] At the beginning the vertices in $\mathcal{I}$ are infected; 

\item[(ii)] If an infected vertex has only one non-infected neighbor, 
the neighbor gets infected;

\item[(iii)] After some interactions all nodes end up infected. 
\end{itemize}
Roughly speaking, this infection property means that the network 
is similar to a ``chain", where the neighboring nodes are coupled 
to each other. 
Such a chain structure often appears in practical situations, and 
as shown in \cite{BurgarthPRA2009b}, it can be fully controlled by 
only accessing to its local subsystem. 
Also it is notable that in general a chain structure realizes 
fast spread of quantum information \cite{Murphy} and is thus 
suitable for e.g. distributing quantum entanglement. 
The result we present here is that such a useful network is 
always identifiable.

\begin{lemma}
\label{infection lemma}
Let $\Omega(\theta)$ be given by \eqref{eq.omega.graph}, and 
assume that the support of $C$ is spanned by
$\{ e_{i} :i \in\mathcal{I}\}$ with 
$(\mathcal{I},\mathcal{V},\mathcal{E})$ having the infecting 
property. 
Then, the system is minimal. 
\end{lemma}

\begin{proof}
From the assumption, at least one vertex $i_{0}\in\mathcal{I}$ is 
connected to exactly one vertex $j_{0}\in \mathcal{I}^{c}$. 
Thus, $\Omega(\theta)$ can be written as
%
%
%
\begin{eqnarray*}
   \Omega(\theta)
     &=& 
       \omega_{i_{0},j_{0}}(\theta)
          ( e_{i_{0}} e_{j_{0}}^T + e_{j_{0}} e_{i_{0}}^T)
   \\ &+&
       \sum_{i\in \mathcal{I},i\neq i_{0}}\sum_{j\in \mathcal{I}^{c}} 
          \omega_{i,j}(\theta) (e_{i} e_{j}^T +e_{j} e_{i}^T)
   \\ &+&
       \sum_{i, j\in \mathcal{I}} 
          \omega_{i,j}(\theta) (e_{i} e_{j}^T +e_{j} e_{i}^T)
   \\ &+&
       \sum_{i,j\in \mathcal{I}^{c}} 
          \omega_{i,j}(\theta) (e_{i} e_{j}^T +e_{j} e_{i}^T). 
\end{eqnarray*}
This readily leads to 
\[
   \Omega(\theta) e_{i_0}=\omega_{i_0,j_0}(\theta) e_{j_0}
    + 2\sum_{j\in \mathcal{I}} 
          \omega_{i_0,j}(\theta) e_{j}.
\]
Also clearly $C^\dagger C e_{i_0}$ is spanned by the vectors $\{e_i : i\in{\cal I}\}$. 
These two facts imply that $A e_{i_0} = (-i\Omega - C^\dagger C/2) e_{i_0}$ is 
spanned by $e_{j_0}$ and $\{e_i : i\in{\cal I}\}$. 
In other words, $C^\dagger$ and $Ae_{i_0}$ generate a new infecting 
set ${\cal I}'={\cal I}\cup\{j_0\}$. 
Repeating this procedure, we find that the controllability matrix 
\eqref{controllability matrix}, 
${\cal C}=-[C^\dagger, AC^\dagger, \ldots, A^{n-1}C^\dagger]$, is 
of full rank, thus the system is controllable. 
This further implies from Lemma~\ref{lemma 3.1} that the system 
is observable, thus as a result it is minimal. 
\end{proof}

\begin{theorem}
\label{infection theorem}
Let $\Omega(\theta)$ be given by \eqref{eq.omega.graph}, and 
assume that the support of $C$ is spanned by
$\{ e_{i} :i \in\mathcal{I}\}$ with 
$(\mathcal{I},\mathcal{V},\mathcal{E})$ having the infecting 
property. 
Then, $\Omega(\theta)$ is identifiable.
\end{theorem}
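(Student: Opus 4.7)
The plan is to apply Corollary \ref{similarity transformation} and show that any unitary $T$ satisfying $T\Omega(\theta)T^\dagger = \Omega(\theta')$ and $CT^\dagger = C$ must reduce to a matrix that forces $\theta=\theta'$. The strategy is to propagate constraints on $T$ along the infection process: each infecting step turns into a linear equation that simultaneously pins down one new column of $T$ and one new edge weight.

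First I would dispose of the initially infected vertices. Because $C^\dagger C$ is strictly positive on the subspace $V_\mathcal{I}:=\mathrm{span}\{e_i : i\in\mathcal{I}\}$, the condition $T^\dagger C^\dagger = C^\dagger$ says that $T^\dagger$ fixes every column of $C^\dagger$, and these columns span $V_\mathcal{I}$, so $T^\dagger|_{V_\mathcal{I}}=\mathrm{id}$; by unitarity this gives $Te_i = e_i$ for each $i\in\mathcal{I}$.

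The inductive core of the argument is to propagate this triviality along the infection. Let $\mathcal{I}_m$ denote the set of vertices infected after $m$ steps, and inductively assume that $Te_i=e_i$ for every $i\in\mathcal{I}_m$ and $\omega'_{i,k}=\omega_{i,k}$ for every edge $(i,k)$ with both endpoints in $\mathcal{I}_m$. Pick $j\in\mathcal{I}_m$ whose only non-infected neighbour is $k^*$ and apply $T\Omega(\theta)=\Omega(\theta')T$ to $e_j$. Using $Te_j=e_j$ and \eqref{eq.omega.graph}, the equation becomes
\[
    \omega_{j,k^*}\, Te_{k^*} + \sum_{k\in N(j)\cap\mathcal{I}_m}\omega_{j,k}e_k \;=\; \omega'_{j,k^*}e_{k^*} + \sum_{k\in N(j)\cap\mathcal{I}_m}\omega'_{j,k}e_k.
\]
Since $T$ is unitary and acts as the identity on $V_{\mathcal{I}_m}$, the image $Te_{k^*}$ is orthogonal to $V_{\mathcal{I}_m}$; taking the $V_{\mathcal{I}_m}$-component of the identity therefore forces $\omega'_{j,k}=\omega_{j,k}$ for every already-identified neighbour $k$, while the orthogonal component gives $\omega_{j,k^*}\, Te_{k^*} = \omega'_{j,k^*}e_{k^*}$.

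The main obstacle is to verify that the infecting edge weight $\omega_{j,k^*}$ is nonzero so that the last equation can be solved for $Te_{k^*}$. I would argue this from the minimality hypothesis: if every infecting edge out of $\mathcal{I}_m$ had zero weight, then $V_{\mathcal{I}_m}^\perp$ would be an $A$-invariant subspace lying in the kernel of $C$, contradicting observability. Once $\omega_{j,k^*}\neq 0$, unitarity of $T$ and reality of the coefficients in \eqref{eq.omega.graph} force $Te_{k^*}=\pm e_{k^*}$, which closes the inductive step (the residual sign is a discrete $\mathbb{Z}_2$ gauge contained in the unitary equivalence class of Theorem~\ref{equivalent class} and matches the sign symmetries already observed in Example~3.1). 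Iterating through all infection steps yields $\omega'_{i,k}=\omega_{i,k}$ on every edge of $\mathcal{E}$, establishing identifiability of $\Omega(\theta)$.
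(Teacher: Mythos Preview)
Your argument follows the same inductive infection scheme as the paper's proof: from $C=CT^{\dagger}$ deduce that $T$ acts as the identity on $V_{\mathcal{I}}$, then use the block form of $T\Omega(\theta)=\Omega(\theta')T$ at each infection step to show $e_{k^*}$ is an eigenvector of $T$ and enlarge the ``identity block'' one vertex at a time. You are in fact slightly more careful than the paper on two points: you attempt to justify $\omega_{j,k^*}\neq 0$ via minimality (the paper is silent here), and you correctly note that reality of the weights only forces $Te_{k^*}=\pm e_{k^*}$, whereas the paper asserts $Ve_{j_0}=e_{j_0}$ without addressing the sign---your observation that this residual $\mathbb{Z}_2$ is exactly the sign ambiguity seen in Example~3.1 is apt, though your final sentence (``$\omega'_{i,k}=\omega_{i,k}$ on every edge'') should then be read up to these signs.
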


\begin{proof}
First, from Lemma~\ref{infection lemma} we can apply 
Theorem \ref{similarity transformation}; 
the two parameters are in the same equivalence class if and only if 
there exists an $n\times n$ unitary matrix $U$ such that 
\begin{equation}
\label{eq.transf.hamiltonian}
    \Omega(\theta_{2})= U\Omega(\theta_{1})U^{\dagger},
\end{equation}
and $C= CU$. 
The latter condition implies $[U, C^\dagger C]= 0$ and in particular $U$ 
commutes with projection $P$ onto the support of $C^\dagger C$ so that
\begin{equation}\label{eq.u}
   U=\left[ \begin{array}{c|c}
           I & 0 \\ \hline
           0 & V \\
            \end{array} \right]
\end{equation}
with $V$ unitary on the orthogonal complement of the support of $C$.  
Let us write the Hamiltonian in the block form according to the 
partition $\mathcal{J}=\mathcal{I}\cup \mathcal{I}^{c}$: 
\begin{equation*}
    \Omega(\theta) 
        = \left[ \begin{array}{c|c}
              \Omega_{11}(\theta) & \Omega_{12}(\theta) \\ \hline
              \Omega_{21} (\theta)&  \Omega_{22}(\theta) \\
          \end{array} \right].
\end{equation*}
Then \eqref{eq.transf.hamiltonian} implies that 
%
%
%
\begin{eqnarray}
\label{Omega_ij}
    \Omega_{11}(\theta_{2})&=&  \Omega_{11}(\theta_{1}), 
\nonumber \\
    \Omega_{12}(\theta_{2})&=& \Omega_{12}(\theta_{1})V^{\dagger}, 
\nonumber \\
    \Omega_{22}(\theta_{2})&=& V   \Omega_{22}(\theta_{1}) V^{\dagger}. 
\end{eqnarray}
The first equation of \eqref{Omega_ij} means that 
\begin{equation}
\label{eq.ii}
    \omega_{i,j}(\theta_{1}) 
       = \omega_{i,j}(\theta_{2}), \qquad i,j\in\mathcal{I}.
\end{equation}
Furthermore, since $\mathcal{I}$ is infecting, there exists at least one 
vertex $i_{0}\in\mathcal{I}$ which is connected to exactly one vertex 
$j_{0}\in \mathcal{I}^{c}$, so that  the off-diagonal block 
$\Omega_{12}(\theta)$ can be written as
%
%
%
\begin{eqnarray*}
   \left[ \begin{array}{c|c}
           0 & \Omega_{12}(\theta) \\ \hline
             0&  0 \\
          \end{array} \right]
     &=& 
       \omega_{i_{0},j_{0}}(\theta)
         ( e_{i_{0}} e_{j_{0}}^T + e_{j_{0}} e_{i_{0}}^T)
   \\&+&
      \sum_{i\in \mathcal{I},i\neq i_{0}}
      \sum_{j\in \mathcal{I}^{c}} \omega_{i,j}(\theta) (e_{i} e_{j}^T +e_{j} e_{i}^T). 
\end{eqnarray*}
The second equation of \eqref{Omega_ij} then implies
\[
      \omega_{i_{0},j_{0}}(\theta_{1}) U e_{j_{0}} 
         = \omega_{i_{0},j_{0}}(\theta_{2}) e_{j_{0}}, 
\]
which means that $e_{j_{0}}$ is an eigenvector of $U$ and 
$\omega_{i_{0},j_{0}}(\theta_{2}) 
= \exp(i\phi_{0})\omega_{i_{0},j_{0}}(\theta_{1})$ 
for some phase $\phi_{0}$. 
But since the coefficients of $\Omega(\theta)$ are assumed to be real, 
this implies that 
\begin{equation}
\label{eq.ij}
   \omega_{i_{0},j_{0}}(\theta_{1})=\omega_{i_{0},j_{0}}(\theta_{2}),
      \qquad i_0\in\mathcal{I},~~j_0\in\mathcal{I}^c.
\end{equation}
Additionally, since $U e_{j_{0}} = e_{j_{0}}$, a decomposition of the form 
\eqref{eq.u} holds with the identity block supported by the index set 
$\mathcal{I}^{\prime} = \mathcal{I}\cup\{j_{0}\}$.

The same argument can now be repeated for the set $\mathcal{I}^{\prime}$, 
and by using the infecting property, all vertices will be eventually included 
in the growing set of indices, so that at the end we have 
$\Omega(\theta_{1}) = \Omega(\theta_{2})$. 
%
%
%
Consequently, from Corollary \ref{similarity transformation}, 
the system is identifiable. 
\end{proof}

From this result, we now readily see that the system in Example~3.3 in 
Section~III-D is identifiable, since clearly this system has a chain-type 
structure and is thus infecting. 
On the other hand, the systems of Examples 3.4 and 3.5 have the tree and 
ring structures, respectively, which are thus not infecting. 
Hence, Theorem~\ref{infection theorem} states nothing about the 
identifiability of these systems; 
in fact, as shown there, the tree system is identifiable, while the ring 
one is not.


\section{Methods for system matrices identification}

Let us consider the situation where we have constructed the transfer function 
matrix $\Xi(s)$, using the input-output data; this is indeed possible via 
several techniques \cite{LjungBook}. 
In the SISO case, this means that we have determined the coefficients 
$(a_i, c_i)$ of the following rational function: 
\[
   \Xi(s)=1+\frac{c_{n-1}s^{n-1} + \cdots + c_0}
             {s^n + a_{n-1}s^{n-1} + \cdots + a_0}. 
\]
Then the following set of system matrices 
\begin{eqnarray}
\label{typical choice}
& & \hspace*{-1em}
       A_0=\left[ \begin{array}{cccc}
       0 & 1 & & 0 \\
         & \ddots & \ddots & \\
       0 & & 0 & 1 \\
       -a_0 & -a_1 &  & -a_{n-1} \\
     \end{array} \right],~~
   B_0=\left[ \begin{array}{c}
       0 \\
       \vdots \\
       0 \\
       1 \\
     \end{array} \right],~~
\nonumber \\[2mm] & & \hspace*{-1em}
   C_0 = [ c_0, \cdots, c_{n-1} ], 
\end{eqnarray}
constitute a realization of $\Xi(s)$ in the sense that 
$\Xi(s)=1+C_0(sI-A_0)^{-1}B_0$. 
Any other realization having the same transfer function can be generated 
via the similarity transformation
\begin{equation}
\label{typical trans}
   A=TA_0T^{-1},~~
   B=TB_0,~~
   C=C_0T^{-1}. 
\end{equation}
However, the matrices \eqref{typical choice} do not satisfy the constraints 
imposed on passive linear quantum systems. 
This means that, for general $T$, the transformation \eqref{typical trans} 
does not yield the set of coefficient matrices of a quantum system; 
e.g., the relation $B=-C^\dagger$ is not satisfied. 
Clearly, in this case, the system matrices $(\Omega, C)$ cannot be 
reconstructed. 
This is an important issue, since from the physics viewpoint we are often 
interested in the system matrices and the system parameters, rather 
than the transfer function. 
Therefore, we need to find a special class of $T$ so that the coefficient 
matrices \eqref{typical trans} satisfy the constraints and that the 
system matrices can be reconstructed. 
In this section, we provide two concrete procedures to achieve this goal.


\subsection{Reconstruction of system matrices}
\label{reconstruction from classical method}

Let $(A_0, B_0, C_0)$ be constructed from the transfer function 
of a minimal quantum system \eqref{dynamics} and \eqref{observation} 
(note that now it is not limited to the SISO case). 
Then, for a certain matrix $T$, the matrices \eqref{typical trans} 
satisfy the constraints \eqref{A matrix}, which immediately yields 
$A+A^\dagger+C^\dagger C=0$, and $B=-C^\dagger$. 
These conditions are written in terms of $(A_0, B_0, C_0)$ as 
\begin{equation}
\label{Lyapunov}
   (T^\dagger T)A_0 + A_0^\dagger (T^\dagger T) 
      + C_0^\dagger C_0 = 0
\end{equation}
and $(T^\dagger T)B_0=-C_0^\dagger$. 
Now the system is assumed to be minimal, thus $A_0$ is Hurwitz 
from Lemma~\ref{lemma 3.1}. 
This means that the Lyapunov equation \eqref{Lyapunov} has a 
unique solution $T^\dagger T>0$. 
Accordingly, we have the diagonalization 
$T^\dagger T=U_0\Lambda U_0^\dagger$, 
where $\Lambda>0$ is a diagonal matrix composed of eigenvalues 
of $T^\dagger T$ and $U_0$ the corresponding unitary matrix. 
Then, $T$ is fully characterized by an arbitrary unitary matrix 
$U$ as 
\begin{equation}
\label{T from classical}
    T=U\sqrt{\Lambda}U_0^\dagger, 
\end{equation}
where $\sqrt{\Lambda}$ is a positive diagonal matrix satisfying 
$(\sqrt{\Lambda})^2=\Lambda$. 
This $T$ generates the equivalence class of quantum systems. 
In particular, by denoting $T_0=\sqrt{\Lambda}U_0^\dagger$, we 
can interpret that $T$ first transforms the matrices $(A_0, B_0, C_0)$ to 
those corresponding to the quantum system, 
$(T_0A_0T_0^{-1}, T_0B_0, C_0T_0^{-1})$; 
then we obtain the unitary equivalence class by acting a unitary matrix 
$U$ on those matrices. 
See Fig.~\ref{unitary equivalence}. 

Now the system matrices $(\Omega, C)$ can be reconstructed. 
It follows from \eqref{A matrix} that $A-A^\dagger=-2i\Omega$, 
which thus together with \eqref{typical trans} and \eqref{T from classical} 
yields
\begin{eqnarray}
\label{omega eq sec 4}
& & \hspace*{-3em}
    \Omega=U\Omega_0 U^\dagger,~~~
\nonumber \\ & & \hspace*{-3em}
    \Omega_0=\frac{i}{2}\Big[ 
       \sqrt{\Lambda} U_0^\dagger A_0 U_0 \sqrt{\Lambda^{-1}}
      -\sqrt{\Lambda^{-1}} U_0^\dagger A_0^\dagger U_0 \sqrt{\Lambda} \Big]. 
\end{eqnarray}
Similarly, from $C=C_0T^{-1}$ we have 
\begin{equation}
\label{c eq sec 4}
    C=(C_0 U_0 \sqrt{\Lambda^{-1}})U^\dagger. 
\end{equation}
These are exactly of the form \eqref{equivalent class transfer} 
in Theorem~\ref{equivalent class}. 
Hence, the following theorem holds. 
Note that a similar result is found in \cite{Petersen2011b}.

\begin{theorem}\label{th.5.1}
Let $A_0$ and $C_0$ be matrices directly obtained from the transfer 
function $\Xi(s)$, e.g. \eqref{typical choice} in the SISO case. 
Then, the equivalence class of system matrices $(\Omega, C)$ is given 
by \eqref{omega eq sec 4} and \eqref{c eq sec 4} with unitary matrix 
$U$, where $\Lambda$ and $U_0$ are constructed from the 
solution of \eqref{Lyapunov}. 
\end{theorem}

\begin{figure}
\centering
\includegraphics[scale=0.32]{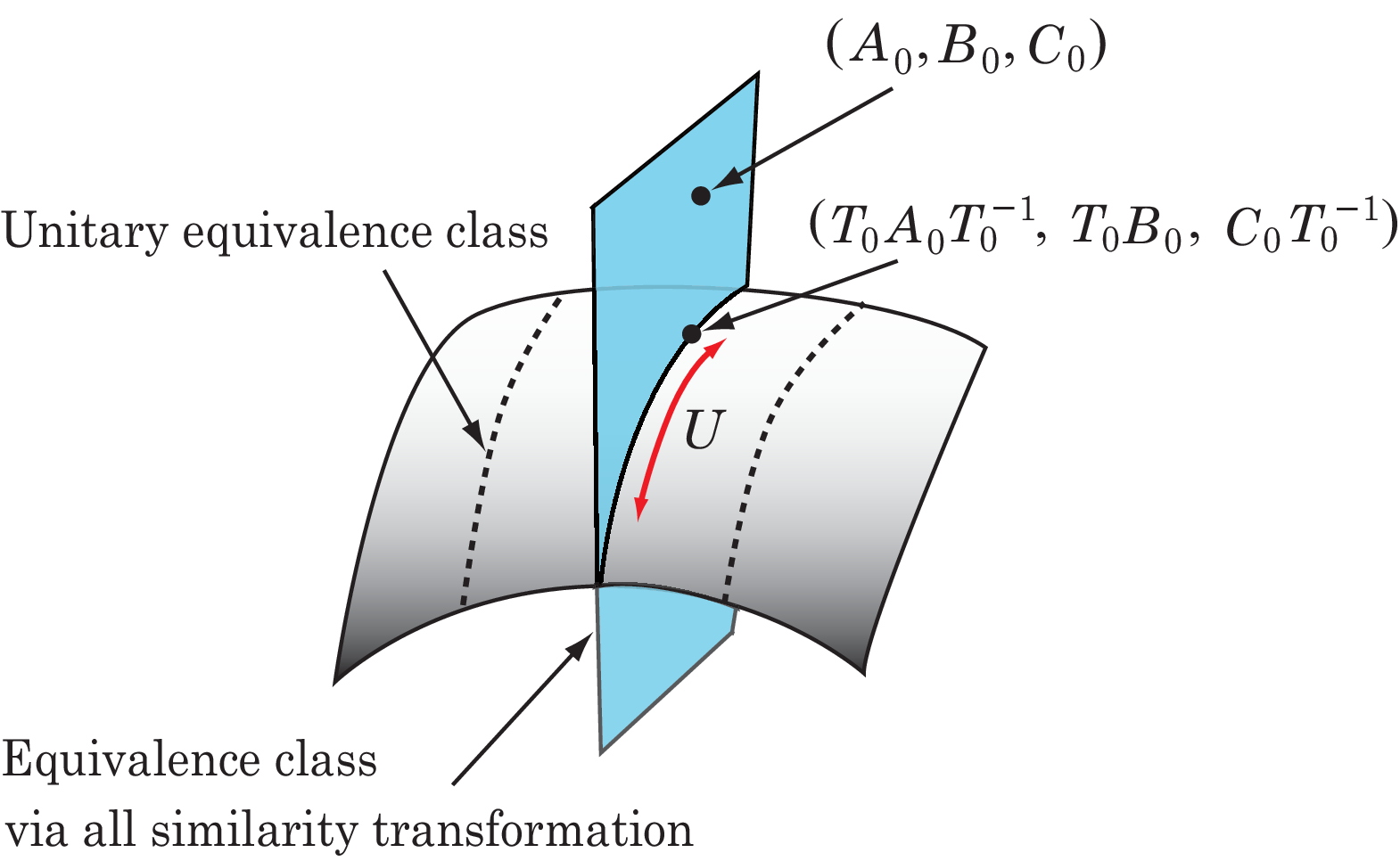}
\caption{
\label{unitary equivalence}
Unitary equivalence class of the system matrices, which is generated 
from $(A_0, B_0, C_0)$. 
We denote $T_0=\sqrt{\Lambda}U_0^\dagger$. 
}
\end{figure}


\subsection{Example}

Let us consider a two-mode SISO system with only single mode accessible 
and assume that the following transfer function has been experimentally 
obtained: 
\[
    \Xi(s) = 1 + \frac{c_1s}{s^2+a_1s+a_0}, 
\]
where $a_0, a_1>0$ and $c_1$ are real numbers. 
(As we will explain later, $c_1=-2a_1$ is satisfied.) 
For this transfer function we take the typical realization 
\eqref{typical choice}; i.e., 
\[
    A_0=\left[ \begin{array}{cc}
           0 & 1 \\
           -a_0 & -a_1 \\
        \end{array} \right],~~
    B_0=\left[ \begin{array}{c}
           0 \\
           1 \\
        \end{array} \right],~~
    C_0=[0,~c_1]. 
\]
Note that $B_0=-C_0^\dagger$ does not hold in general. 
With this choice, the Lyapunov equation \eqref{Lyapunov} 
has the following unique solution: 
\[
    T^\dagger T
       =\frac{c_1^2}{2a_1}
        \left[ \begin{array}{cc}
           a_0 & 0 \\
           0 & 1 \\
        \end{array} \right], 
\]
which is equal to $\Lambda$, and now $U_0=I$. 
Thus, the equivalence class of the system matrices are given by 
\eqref{omega eq sec 4} and \eqref{c eq sec 4} with 
\[
    \Omega_0
       =\left[ \begin{array}{cc}
           0 & i\sqrt{a_0} \\
           -i\sqrt{a_0} & 0 \\
        \end{array} \right],~~
    C_0 U_0 \sqrt{\Lambda^{-1}}=[0, -\sqrt{2a_1}]. 
\]
In particular, when choosing $U=[0,~-1~;~i,~0]$, we have 
\[
    \Omega
      = \left[ \begin{array}{cc}
           0 & \sqrt{a_0} \\
           \sqrt{a_0} & 0 \\
        \end{array} \right],~~
    C=[\sqrt{2a_1},~0], 
\]
which have exactly the same forms as the system matrices in Example~3.3 
with specifically $\theta_2=0$ taken. 
That is, the coupling strength between the system modes is identified as 
$\sqrt{a_0}$, and the system-field coupling strength is identified as 
$\sqrt{2a_1}$. 
Note that the condition $(T^\dagger T)B_0=-C_0^\dagger$ 
yields $c_1=-2a_1$; indeed this relation is satisfied for the two-mode 
system, as easily seen by again setting $\theta_2=0$ in Example~3.3.


\subsection{Direct reconstruction of system matrices from the 
transfer function}

In Section \ref{reconstruction from classical method} we have shown that 
the equivalent class of system matrices can be reconstructed through typical 
realization methods employed in classical system theory. 
We here present another procedure that directly reconstructs the 
equivalence class.

We begin with the simple SISO model where the coupling matrix is 
of the form $C=(\sqrt{\theta}, 0, \ldots, 0)$ with $\theta>0$ an unknown 
parameter; 
that is, we assume that only a single mode is accessible. 
However, we do not assume a specific structure on $\Omega$ and 
write it as 
\begin{equation}
\label{def of Omega}
   \Omega
    =\left[ \begin{array}{cc}
       \Omega_{11} & E \\
       E^\dagger & \tilde\Omega \\
     \end{array} \right], 
\end{equation}
where $\tilde\Omega$ is a Hermitian matrix with dimension $n-1$, 
$\Omega_{11}$ is a real number, and $E$ is a $n-1$ dimensional 
complex column vector. 
In this case, the transfer function \eqref{Q transfer function} 
is given by 
\[
    \Xi(s)
     =1-\theta\Big(
           s+i\Omega_{11}+\frac{\theta}{2}
              +E(s+i\tilde\Omega)^{-1}E^\dagger
                 \Big)^{-1}. 
\]
Again we assume that $\Xi(s)$ is known. 
The parameters are then reconstructed as follows.

First, through a straightforward calculation we have 
\[
    s(1-\Xi(s))
     = \frac{\theta}
        {1+i\Omega_{11}/s + 1/2s + E(s^2+is\tilde\Omega)^{-1}E^\dagger}, 
\]
which thus leads to 
\[
   \theta=\lim_{|s|\rightarrow\infty}s(1-\Xi(s)). 
\]
Next, since now $\theta$ has been identified, we can further identify 
$\Omega_{11}$ using the following equation: 
\[
   \Omega_{11}
    = \lim_{|s|\rightarrow\infty}\Big[
         \frac{i\theta(\Xi(s)+1)}{2(\Xi(s)-1)} + is \Big]. 
\]
Now, $\theta$ and $\Omega_{11}$ have been obtained in addition 
to $\Xi(s)$. 
This means that  the function 
$\tilde{\Xi}(s):=E(sI+i\tilde\Omega)^{-1}E^\dagger$ is known. 
We diagonalize $\tilde\Omega$ as $\tilde\Omega=V\tilde\Lambda V^\dagger$ 
with 
$\tilde\Lambda
={\rm Diag}\{\tilde\lambda_1, \ldots, \tilde\lambda_{n-1}\}$. 
Then, $\tilde{\Xi}(s)=EV(sI-\tilde\Lambda)^{-1}(EV)^\dagger$ is 
of the form 
\[
    \tilde{\Xi}(s)=\sum_{i=1}^{n-1}\frac{|E'_i|^2}{s+i\tilde\lambda_i},
\]
where $E'_i$ is the $i$-th element of $EV$. 
This implies that $\tilde\lambda_i$ can be detected by examining 
the function $\tilde{\Xi}(i\omega)$; 
that is, $-i\tilde\lambda_i$ is the value on the imaginary axis 
such that $\tilde{\Xi}(i\omega)$ diverges. 
Then, (assuming that $\tilde{\Omega}$ has non-degenerate spectrum) 
we can further determine $|E'_i|^2$ from 
\[
    |E'_i|^2 = (s+i\tilde\lambda_i)\Xi(s)\big|_{s=-i\tilde\lambda_i}.
\]
Lastly, let us express $E'_i$ as $E'_i=e^{i\phi_i}|E'_i|$ with phase 
$\phi_i$ and define $\Phi={\rm Diag}\{\phi_1,\ldots,\phi_{n-1}\}$. 
Then, \eqref{def of Omega} can be written 
\[
   \Omega
    =\left[ \begin{array}{cc}
       1 & 0 \\
       0 & V e^{-i\Phi} \\
     \end{array} \right]
     \left[ \begin{array}{cc}
       \Omega_{11} & |E'| \\
       |E'|^\top & \tilde\Lambda \\
     \end{array} \right]
     \left[ \begin{array}{cc}
       1 & 0 \\
       0 & e^{i\Phi} V^\dagger \\
     \end{array} \right], 
\]
where $|E'|=[|E'_1|,\ldots,|E'_{n-1}|]$. 
As shown above, the middle matrix can be completely identified from 
the transfer function $\Xi(s)$. 
Therefore, all the eignevalues of $\Omega$ can now be determined. 
In the case when $\tilde{\Omega}$ is degenerated, all the elements of 
the vector $|E'|$ cannot be determined, but $\Omega_{11}$ and 
$\tilde{\Lambda}$ can be. 
Thus as in the above case the eigenvalues of $\Omega$ can be identified. 
Let us now summarize the result.

\begin{theorem}\label{th.5.2}
The equivalence class of systems having a given transfer function 
$\Xi(s)$ is completely parameterized by the set of parameters 
$(\theta, \Omega_{11}, |E_{i}^{\prime}|, \tilde{\lambda}_{i}) 
\in \mathbb{R}^{2n}$, which are directly computed from $\Xi(s)$ 
using the above procedure. 
In particular, the coupling parameter $\theta$ and the eigenvalues of 
$\Omega$ can be identified. 
\end{theorem}

To describe the general case, we assume that the $m\times n$ matrix 
$C$ is of rank $m$, meaning that all the injected input fields couple with 
the system. 
Furthermore, we assume $m\leq n$; 
in this case, without loss of generality, $C$ can be expressed as 
$C=(\tilde{C}, 0)$, with $\tilde{C}$ a $m\times m$ full rank complex matrix. 
Correspondingly, we represent $\Omega$ as in the same form 
\eqref{def of Omega}, in which case $\Omega_{11}$ is a 
$m\times m$ Hermitian matrix. 
Then, as in the previous case we have 
\[
    \tilde{C}\tilde{C}^\dagger
       =\lim_{|s|\rightarrow\infty}s(1-\Xi(s)). 
\]
This means that $\tilde{C}$ can be represented in terms of a 
known strictly positive matrix $\tilde{C}_0$ and an arbitrary unitary 
matrix $\tilde{U}$ as $\tilde{C}=\tilde{C}_0 \tilde{U}$. 
Moreover, 
\[
    \tilde{U}\Omega_{11}\tilde{U}^\dagger
     = \lim_{|s|\rightarrow\infty}\Big[
         -i\tilde{C}_0^\dagger (I-\Xi(s))^{-1}\tilde{C}_0 + isI \Big] 
           + \frac{i}{2}\tilde{C}_0^\dagger\tilde{C}_0, 
\]
which means that $\Omega_{11}$ can be determined up to the 
unitary rotation by $\tilde{U}$. 
Now, we are given 
\[
    \tilde{\Xi}(s)=\tilde{U}E(sI+i\tilde{\Omega})^{-1} 
                    (\tilde{U}E)^\dagger. 
\]
Hence, from the same procedure as in the simple case, we can determine 
the eigenvalues of $\tilde{\Omega}$ and $E_iE_j^*$ from $\tilde{\Xi}(s)$. 
Consequently, the eigenvalues of $\Omega$ can be also be reconstructed.


\section{Statistical analysis of the system identification problem}
\label{sec.statistics}

In this section we study the problem of \emph{how} to identify 
the unknown parameters of a linear system, and related questions such 
as which input states are optimal, what is the quantum Fisher information 
of the output, and which output measurements should be performed.

As before, we suppose that the system dynamics depends on an unknown 
parameter $\theta\in \Theta$, as $\Omega=\Omega(\theta)$ and 
$C = C(\theta)$. 
We will probe the system with a coherent input state $|\alpha(t)\rangle$ 
whose temporal profile is given by the complex amplitude function 
$\alpha(t)\in L^2( \mathbb{R}, \mathbb{C}^m)$. 
In experiments, $\alpha(t)$ would be supported in the finite time interval 
of the experiment, but for our analysis the time length will not be considered 
as an essential resource, but rather the total ``energy'' $E= \int |\alpha(t)|^2 dt$ 
used to excite the system. 
We will furthermore assume that the Fourier transform 
$\tilde{\alpha}(\omega)$ concentrates around a finite set of frequencies 
$\omega_1,\dots ,\omega_p$, so that in the frequency domain the input 
state can be approximated by the finite mode continuous variables state
$$
    | \vec{\bf z}, \vec{\omega} \rangle_{\rm in} \approx 
    |{\bf z}_{1};\omega_{1}\rangle\otimes\dots\otimes 
                              |{\bf z}_{p};\omega_{p}\rangle, 
$$
where $\vec{\bf z}:= ({\bf z}_{1}, \dots ,{\bf z}_{p})$, $\vec{\omega}
:= (\omega_{1}, \dots \omega_{p})$, and $|{\bf z}_{i} ;\omega_{i}\rangle$ 
represent the coherent state with amplitude ${\bf z}_{i}\in{\mathbb C}^m$ 
and frequency $\omega_i$. 
In this representation, the ``energy'' constraint is $E= \sum_i |{\bf z}_{i}|^2$.

Since the system is linear, the output is obtained by rotating the amplitude 
vector ${\bf z}$ by the $\theta$-dependent transfer function 
$ \Xi_{\theta}(-i\omega)$, separately for each frequency mode 
$$
   |{\bf z}_{i};\omega_{i}\rangle\longmapsto
   | \Xi_{\theta}(-i\omega_{i}) {\bf z}_i\rangle,
$$
so the the output state is
\begin{eqnarray}
& & \hspace*{-2em}
\label{eq.output.coherent}
    | \vec{\bf z}_\theta, \vec{\omega} \rangle_{\rm out} \approx 
    |\Xi_{\theta}(-i\omega_{1}) {\bf z}_{1};\omega_{1}\rangle
        \otimes\dots\otimes 
            |\Xi_{\theta}(-i\omega_{p}){\bf z}_{p};\omega_{p}\rangle.
\nonumber \\ & & \hspace*{-1.5em}
\mbox{}
\end{eqnarray}
The task is now to perform an appropriate measurement and provide 
an estimator $\tilde{\theta}$ of $\theta$ based on the measurement 
data. 
The parameter estimation for such ``unitary rotation'' families of states 
is a fairly well understood topic in quantum statistics \cite{Holevo}, 
but for reader's convenience we briefly recall some of the key concepts here.

For a quantum system with Hilbert space $\mathcal{H}$, an arbitrary 
measurement $M$ with values in the probability space $(\mathcal{X},\Sigma)$ 
is described by a positive operator valued measure (POVM) over 
$(\mathcal{X},\Sigma)$, i.e. a family $M:= \{m(A) : A\in \Sigma\}$ 
of operators on $\mathcal{H}$ satisfying the properties 
\begin{itemize}
\item
    positivity: $m(A)\geq 0$ for all events $A\in \Sigma$;
\item
    $\sigma$-additivity: for any disjoint countable family of events $A_i$, 
    $\sum_i m(A_i)= m(\cup_i A_i)$ holds;
\item
    normalization: $m(\mathcal{X})= \mathbf{1}$.
\end{itemize}
When the system is in state $\rho$, the probability distribution of the 
measurement outcome $X$ is $\mathbb{P}^M_\rho(dx) = {\rm Tr}(\rho m(dx))$. 
Now consider that the state depends on an unknown one-dimensional 
parameter $\theta\in \Theta\subset\mathbb{R}$, such that 
$\theta\mapsto \rho_\theta$ forms a smooth family of states. 
The multidimensional case will be discussed later. 
In order to estimate $\theta$ we perform a measurement $M$ and 
construct an estimator $\tilde{\theta}(X)$, whose performance can be 
measured by the mean square error (MSE)
$$
    \mathbb{E}_\theta \big[(\tilde{\theta}- \theta)^2 \big] 
        = \int \big(\tilde{\theta}(x)- \theta\big)^2\mathbb{P}^M_{\rho_\theta}(dx).
$$
As the MSE depends on the measurement and the chosen estimator, one 
would like to find an optimal procedure minimizing the MSE. 
The {\it quantum Cram\'{e}r-Rao bound} \cite{Braunstein&Caves} states that 
for any measurement and any unbiased estimator $\tilde{\theta}$ 
(i.e. $\mathbb{E}_\theta(\tilde{\theta}) = \theta$) the following lower 
bound holds:
\begin{equation}
\label{eq.QCR}
    \mathbb{E}_\theta \big[  (\tilde{\theta}- \theta)^2 \big] \geq  F(\theta)^{-1},
\end{equation}
where $F(\theta)={\rm Tr}(\rho_\theta L_\theta^2)$ is the 
\emph{quantum Fisher information} (QFI) and $L_\theta=L_\theta^\dagger$ 
is the symmetric logarithmic derivative defined through the operator-valued 
equation 
$$
    \frac{d\rho_\theta}{d\theta} 
      = \frac{1}{2}(L_\theta\rho_\theta + \rho_\theta L_\theta).
$$
In particular, if $\rho_\theta= \ket{\psi_\theta}\bra{\psi_\theta}$ is 
a pure state family, then 
\begin{equation}
\label{pure Fisher}
     F(|\psi_\theta\rangle) 
        = 4 \Big(\pro{\psi'_\theta}{\psi'_\theta} 
                   - |\pro{\psi'_\theta}{\psi_\theta}|^2 \Big), 
\end{equation}
where $\ket{\psi'_\theta}=d\ket{\psi_\theta}/d\theta$.

The bound \eqref{eq.QCR} is achievable when a large number $n$ of 
copies of $\rho_\theta$, in the sense that there exist measurements 
and estimators $\tilde{\theta}_n$ such that
$$ 
      \lim_{n\to\infty} n \cdot \mathbb{E}_\theta[  (\tilde{\theta}_n- \theta)^2 ] 
          = F(\rho_\theta)^{-1}.
$$
In our case that $\ket{\vec{\bf z}_\theta, \vec{\omega}}_{\rm out}$ is 
a product of independent coherent states, each frequency mode $\omega_i$ 
carries an amount of QFI which is proportional to the change of the amplitude 
$\Xi_{\theta}(-i\omega_{i}) {\bf z}_i$ with $\theta$. 
The total QFI is given by the following convex combination of individual 
informations: 
\[     
    F(\theta) = \sum_{i=1}^p F_i(\theta)
                 = 4 E \cdot
        \sum_{i=1}^p \frac{\| {\bf z}_i\|^2}{E} \left\| 
                          \frac{d\Xi_{\theta}(-i\omega_{i})}
                                   {d\theta}\frac{ {\bf z}_i}{\|{\bf z}_i\|}\right\|^2.
\]
This implies that, for a one-dimensional parameter, the optimal input 
consists of a coherent signal with single frequency $\omega_{\rm opt}$ 
and amplitude ${\bf z}_{\rm opt} = E {\bf w}_{\rm opt}$ defined as the 
solution of the following optimization problem:
\begin{equation}
\label{freq optim}
    (\omega_{\rm opt}, {\bf w}_{\rm opt}) 
        = \underset{\omega, \|{\bf w}\|=1 }{\arg\max}
            \left\| \frac{d\Xi_{\theta}(-i\omega)}{d\theta} {\bf w}\right\|^2.
\end{equation} 
As $\Xi_{\theta}(-i\omega)$ is unitary, the generator 
$G_{\theta}= i d \Xi_{\theta}(-i\omega)/d\theta$ is self-adjoint. 
Thus ${\bf z}_{\rm opt}$ is given by the eigenvector 
of $G_{\theta}$ whose eigenvalue has the largest absolute value. 
Then the optimal QFI is
\begin{equation}
F_{\rm opt} 
        = 4E  \underset{\omega}{\max}
            \left\| \frac{d\Xi_{\theta}(-i\omega)}{d\theta} \right\|^2,
\end{equation} 
and it can be achieved asymptotically by performing adaptive homodyne 
measurements \cite{Wiseman adaptive}.


\subsection{SISO example}

Consider the single mode (i.e. $n=1$) SISO system with parameters 
$\Omega = \theta_1$ and $C= \theta_2$, such as an ideal mechanical 
oscillator with resonant frequency $\theta_1$. 
The transfer function is then 
\begin{eqnarray}
& & \hspace*{-2em}
\label{Estimation SISO example}
  \Xi_{\theta}(-i\omega) 
     = \frac{-i\omega + i\theta_1 - \theta_2^2/2}
                { -i\omega + i\theta_1 +\theta_2^2 /2} 
     = -\exp(-2i\phi(\omega, \theta_1, \theta_2)),
\nonumber \\ & & \hspace*{-1.5em}
\mbox{}
\end{eqnarray}
where 
$$
  \phi(\omega, \theta_1, \theta_2) 
     = \arctan \Big( \frac{-2\omega+2\theta_1}{\theta_2^2} \Big)
$$
is the phase of $i(-\omega +\theta_1) - \theta_2^2/2$.
We distinguish two cases depending on which of $\theta_1$ and $\theta_2$ 
is considered to be unknown.

If $\theta_1$ is unknown, then QFI at frequency $\omega$ is given by 
$$
     F(\theta_1;\omega) 
        = 16E \left| \frac{d\phi(\omega, \theta_1, \theta_2)}{d\theta_1} 
                       \right|^2 
        = 16 E\left| \frac{2\theta_2^2 }{\theta_2^4 + 4 (\omega-\theta_1)^2}
                       \right|^2.
$$
This takes the maximum $F_{\rm opt} = 64 E \theta_2^{-4}$ at 
$\omega_{\rm opt}=\theta_1$. 
There are three remarks on this result.

Firstly, $\omega_{\rm opt}=\theta_1$ means that the optimal input is a coherent 
field with {\it unknown} resonant frequency. 
In practice, one can adopt an adaptive strategy whereby one initially injects 
a signal composed of sufficiently many frequencies, also called ``M-sequence" 
\cite{LjungBook}, followed by more precise inputs targeting the optimal 
frequency. 
Secondly, the optimal QFI $F_{\rm opt} = 64 E \theta_2^{-4}$ increases as 
$\theta_2$ decreases and the system becomes less stable (note that the 
system's $A$ matrix has eigenvalue $-i\theta_1 - \theta_2^2/2$). 
This is expected due to the longer coherence time, but it also implies that
the time to reach the asymptotic regime is longer.
Therefore, as in the classical case, there exists a trade-off between the 
stability and the information for system identification. 
The third observation is that the maximum QFI $F_{\rm opt}$ 
can be achieved for large ${\bf z}$ by adaptively choosing the optimal frequency, 
and by performing a homodyne measurement of an appropriate quadrature, similar 
to the adaptive phase estimation protocol of \cite{Wiseman adaptive}.

We pass now to the second case where $\theta_2$ is unknown. 
In this case, QFI at frequency $\omega$ is 
$$
F(\theta_2;\omega) 
    = 16E \left| \frac{d\phi(\omega, \theta_1, \theta_2)}{d\theta_2} \right|^2 
    = 16 E\left| \frac{4(-\omega+\theta_1)\theta_2 }
                                  {\theta_2^4 + 4 (-\omega+\theta_1)^2}\right|^2.
$$
By optimizing over $\omega$ we find that the largest QFI is 
achieved at $\omega_{\rm opt}  = \theta_1 \pm \theta_2^2/2 $ and 
is equal to $F_{\rm opt} = 16E\theta_2^{-2}$. 
Note that in this case $F_{\rm opt}$ depends on the unknown parameter 
$\theta_2$.

Similar techniques can be applied to the more general case of one-dimensional parameters.
For instance, a SISO passive linear system can be represented as a 
cascaded network of single-mode oscillators, hence the transfer function at 
$-i\omega$ is the complex phase \cite{Nurdin2010}
%
%
%
\begin{eqnarray*}
    \Xi_{\theta}(i\omega)
      &=& 
       (-1)^n \frac{(  \overline{ -i\omega - \zeta_1}) }{ ( -i\omega - \zeta_1)} \dots 
        \frac{ ( \overline{ -i\omega - \zeta_n} )}{  ( -i\omega - \zeta_n)}
    \\ &=& 
        (-1)^n\exp\Big(-2i \sum_j \arg (-i\omega - \zeta_j)  \Big).
\end{eqnarray*}
$\zeta_j$ is the $\theta$-dependent pole of the transfer function. 
In principle the optimal frequency can be obtained in the same way as above 
by maximizing QFI 
$F(\omega)= 4| d\Xi_{\theta}(-i\omega)/d\theta|^2$ over $\omega$.

\subsection{Estimation for multidimensional parameters}

\begin{figure}
\centering
\includegraphics[scale=0.35]{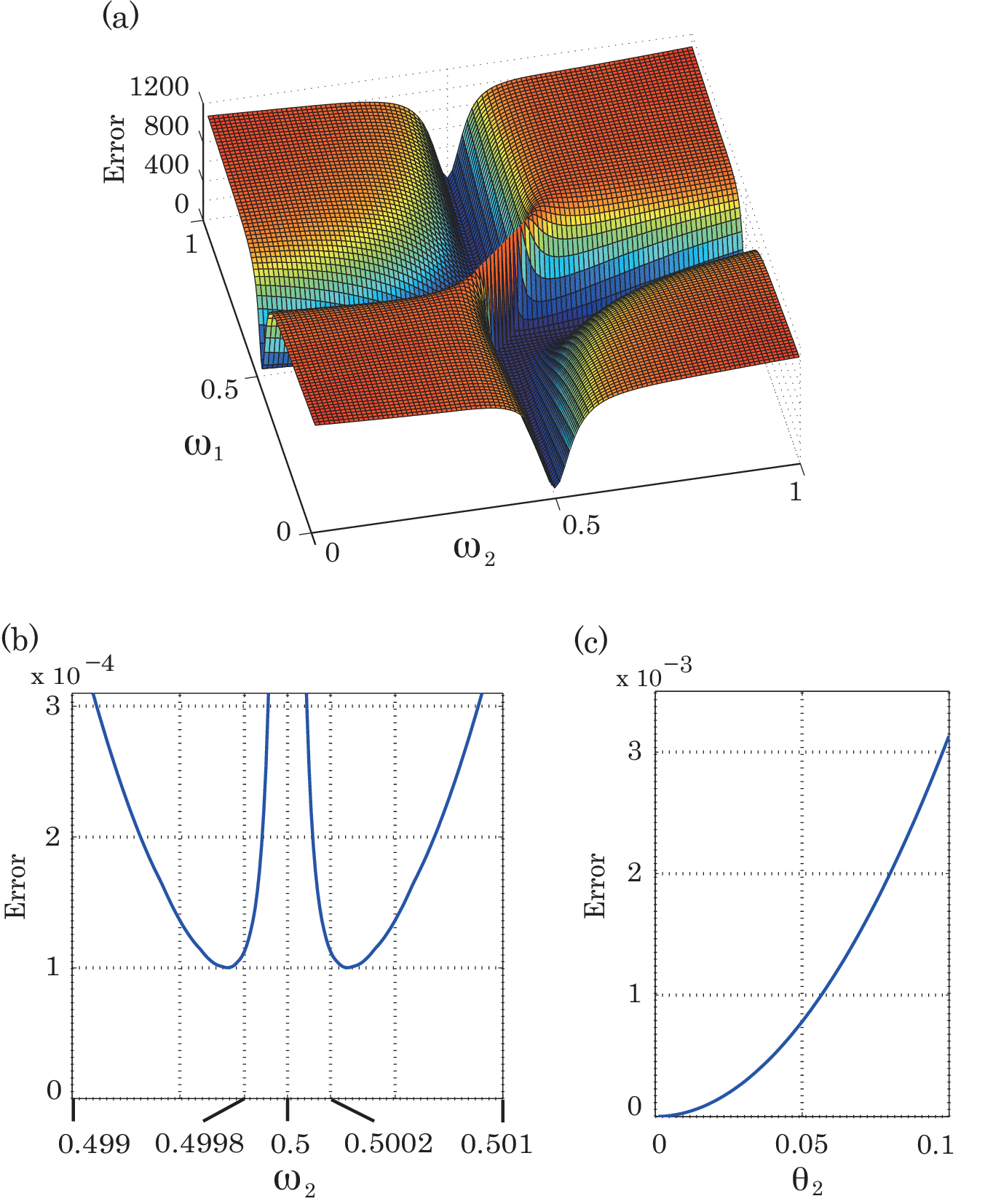}
\caption{
\label{Simulation}
(a) The lower bound of the total estimation error as a function of the 
frequencies $(\omega_1, \omega_2)$, in the case $\theta_1=0.5$ and 
$\theta_2=0.02$. 
(b) A cut through the previous plot at $\omega= \theta_1= 0.5$ shows 
two local minima at $\omega_2\approx \theta_1\pm \theta_2^2/2$. 
(c) Achievable lower bound for the MSE as a function of $\theta_2$, for 
the values of $\omega_{1,2}$ described above and with $r=1/2$. 
}
\end{figure}

The theory for one-dimensional parameter can be extended 
to multi-dimensional parameters 
$\theta=[\theta_1, \ldots, \theta_m]^T\in \mathbb{R}^{m}$. 
In this case the error covariance matrix is bounded by the following 
Cram\'{e}r-Rao matrix inequality:
\begin{equation}
\label{multi QCR}
    \mathbb{E}_\theta \big[  (\tilde{\theta}- \theta)(\tilde{\theta}- \theta)^T \big] 
       \geq  F^c(\theta)^{-1} \geq  F(\theta)^{-1}.
\end{equation}
$\tilde{\theta}$ is the vector of unbiased estimators. 
$F^c(\theta)$ is the {\it classical} Fisher information (CFI) 
matrix corresponding to the probability distribution of a particular 
measurement process, while $F(\theta)$ is the QFI matrix of the 
output state, defined similarly to the one dimensional case 
\cite{Holevo,Braunstein&Caves}.

However, the quantum Cram\'e{r}-Rao bound is in general not achievable 
due to incompatibility of the optimal measurements corresponding to 
different parameter components. 
We will therefore focus on the possibly sub-optimal setup where a dual 
homodyne (heterodyne) measurement is performed on each output mode. 
Essentially this means that the output is split into two channels, and 
complementary quadratures are measured on each. 
In particular, this implies that the MSE for the heterodyne measurement 
is at most a factor two larger than that of the optimal measurement. 
For a one-mode coherent state $\ket{z}$ the probability density of the 
measurement outcome is the two-dimensional Gaussian centered at 
$({\Re}(z), {\Im}(z))$ and variance equal to two times the vacuum fluctuations: 
$p(y)={\cal N}({\Re}(z), {\Im}(z), \mathbf{1})$.

As an example, we consider the same SISO system as above, but in this 
case the unknown parameter is $\theta= (\theta_1,\theta_2)$. 
We will consider an input consisting of several frequencies, with 
corresponding output amplitudes 
${\bf z}_{\theta;i} =  \Xi_{\theta}(-i\omega_{i}) {\bf z}_i\in \mathbb{C}$ , 
for $i=1,\dots ,p$. 
The $jk$ element of the CFI matrix of $p(y ; \theta)$ is then given by 
%
%
%
\begin{eqnarray}
& & \hspace*{-2em}
     F^c_{jk}(\theta) = 
     E \sum_{i=1}^{p} f^{c}_{jk,i}(\theta) 
\nonumber \\ & & \hspace*{-1.5em}
     =  E \cdot \sum_{i=1}^{p} 
                  \frac{2|{\bf z}_{i}|^{2}}{E} 
                  \left[ \frac{\partial {\Re}({\bf z}_{\theta;i})}{\partial \theta_j}
                        \frac{\partial {\Re}({\bf z}_{\theta;i})}{\partial \theta_k}
                   + \frac{\partial {\Im}({\bf z}_{\theta;i}) }{\partial \theta_j}
                        \frac{\partial {\Im}({\bf z}_{\theta;i})}{\partial \theta_k}\right].
\nonumber
\end{eqnarray}
%
The explicit expression of the (normalized) CFI matrix is
%
%
%
\begin{eqnarray}
& & \hspace*{-2em}
      f^c(\theta ; \omega)
\nonumber \\ & & \hspace*{-1.5em}
       = \frac{8}{( (\omega - \theta_1)^2 + \theta_2^4/4 )^2}
       \left[ \begin{array}{cc}
               \theta_2^4/4  & (\omega - \theta_1)\theta_2^3/2 \\
               (\omega - \theta_1)\theta_2^3/2  &  (\omega - \theta_1)^2\theta_2^2 \\
          \end{array} \right].
\nonumber
\end{eqnarray}
Note that ${\rm rank} (f^c(\theta ; \omega))=1$, which simply means that 
a single coherent input state with fixed $\omega$ can only identify one 
component of the parameter. 
We will therefore consider the case of two frequency modes $\omega_1$ and 
$\omega_2$.
By asymptotic efficiency theory, the MSE ${\mathbb E}_\theta[(\tilde{\theta}_1
-\theta_1)^2 + (\tilde{\theta}_2-\theta_2)^2]$ of optimal estimators (e.g. the 
maximum likelihood) scales as $\epsilon/E$ where
%
%
%
\begin{eqnarray}
& & \hspace*{-2em}
       \epsilon = {\rm trace}\left[f^{c}(\theta)^{-1}\right]
\nonumber \\ & & \hspace*{-1.3em}
     =
      {\rm trace}\Big[
         \Big( r f^c(\theta ; \omega_1) + (1-r) f^c(\theta ; \omega_2) \Big)^{-1} \Big], 
\nonumber
\end{eqnarray}
and $0<r<1$ is the weight of the input with frequency $\omega_1$. 
To find the optimal procedure and MSE one has to minimize $\epsilon$ over 
$r$ and $(\omega_{1},\omega_{2})$. 
Figure \ref{Simulation} (a) illustrates the dependence of $\epsilon$ on the 
frequencies $\omega_{1},\omega_{2}$, for a set of true 
parameters $\theta_1=0.5$ and $\theta_2=0.02$, where $r$ is optimized 
at each point. 
We find the values of the optimal frequencies are very near to those 
which were shown to be optimal in the two one-dimensional estimation 
problems, namely  $\omega_1\approx \theta_{1}$, and 
$\omega_2 \approx\theta_1\pm \theta_2^2/2$, cf. Fig.~\ref{Simulation} (b). 
For these values, and with $r=1/2$ the bound $\epsilon$ is given by 
$$
    \epsilon(\theta_{2}) = \frac{\theta_{2}^{2}}{16}(5+ \theta_{2}^{2}),
$$
which is plotted in Fig.~\ref{Simulation} (c). We note that as before, the 
MSE vanishes when the coupling constant 
$\theta_{2}$ goes to zero, and does not depend on $\theta_{1}$.


\subsection{Heisenberg scaling}

The coherent input setup is fairly close to that of classical linear system 
identification. 
We will show now that the superposition principle allows us to attain higher 
estimation precision as encountered in quantum enhanced metrology 
\cite{Giovanetti}. 
Consider as above, a single-mode SISO model with unknown 
Hamiltonian $\Omega= \theta$ and known coupling $C= c$. 
Let the input field state be the \emph{coherent superposition} of the vacuum 
and the $n$-photon state of frequency $\omega$: 
$$
    | \psi\rangle_{\rm in} 
       = \frac{1}{\sqrt{2}} \left( |0 \rangle + | n ;\omega \rangle \right), 
$$
whose mean energy is $E= n/2$. We note that $\ket{n;\omega}$ is a state 
of the light field with continuous-mode $\hat b(t)$ satisfying 
\eqref{eq.comm.relations}, and refer to the Appendix for more details.

Now the system interacts with the field with initial state $\ket{\psi}_{\rm in}$. 
For times which are significantly longer than the duration of the input pulse, 
the system returns to the ground state due to the stability of the dynamics 
while the field state is transformed by the action of the transfer function, 
and the two are decoupled from each other. 
In particular, the field output state is given by 
%
%
%
\begin{eqnarray}
\label{n photon output}
| \psi_\theta \rangle_{\rm out} 
    &=& \frac{1}{\sqrt{2}} 
            \left( |0\rangle + \Xi_\theta(-i\omega)^n| n ;\omega \rangle \right)
\nonumber \\
    &=& \frac{1}{\sqrt{2}} 
            \left( |0\rangle 
                     + e^{-2 i n \phi(\omega, \theta,c)} | n ;\omega \rangle \right).
\end{eqnarray}
For derivation, see Appendix. 
The QFI of $|\psi_\theta \rangle_{\rm out}$ is calculated as 
$$
   F(\theta) 
           = 16E^2 \left| \frac{d\phi(\omega, \theta, c)}{d\theta} \right|^2, 
$$
which is exactly the same as in the coherent input case, with the important 
difference that it has a quadratic (Heisenberg) scaling with $E$, familiar 
from quantum metrology models. 
In particular, the optimal frequency is $\omega_{\rm opt}=\theta_1$, and 
the corresponding QFI is $64 E^{2}/\theta_{2}^{4}$.  
As discussed before, since $\omega_{\rm opt}$ is unknown, in practice we 
can use an adaptive strategy in which the input frequency is repeatedly 
tuned to approach $\omega_{\rm opt}$ as the estimator becomes more and 
more accurate. 
Note however that the quadratic scaling with $E$ does not rely on the 
frequency distribution of the input, but rather on the ability to prepare 
superpositions of states with very different photon numbers. 
In particular, more realistic input signal containing a continuum of 
frequencies can achieve a similar scaling in $E$.

The  above input state is by no means the only design exhibiting quadratic 
scaling in $E$. 
Other schemes based on squeezed or  NOON states have been extensively 
discussed in the literature on quantum metrology \cite{Rafal}. 
Here  we limit ourselves to listing some of the issues that require a more 
in depth analysis. 
The first question is whether the Heisenberg scaling can be achieved by performing realistic measurements, e.g. homodyne or photon counting. This question can be addressed by using the interferometric setup described in \cite{Caves}, which involves a product of squeezed and coherent input states. The optimization over input frequencies and general linear output measurements  can be formulated along the lines of the previous section, and will be addressed in a future publication. Other issues which have not been addressed are decoherence due to losses, and measurement imperfections. To some extent these can be modeled by extending the linear setup to include additional input-output channels which are not monitored.


\section{General linear systems}

In this paper we dealt with passive systems, as a special, but important class 
of linear input-output systems. 
We showed that taking this prior information into account leads to smaller 
equivalence classes than it is expected based on the classical theory. 
Additionally, in this case, the statistical estimation problem can be cast into 
that of optimizing the mean square error for a given energy of the input. 
For completeness, we will now sketch the general set-up of the system 
identification problem for linear systems which will be analysed in more 
detail elsewhere. We will use the following ``doubled-up" notation convention 
introduced in \cite{GoughPRA2010}. For a vector of operators 
$\hat{\bf x} = [\hat{x}_1, \dots ,\hat{x}_n]^T$ we denote 
$\breve{\bf x}:=[\hat{x}_1,\ldots,\hat{x}_n, \hat{x}^*_1,\ldots,\hat{x}^*_n]^{T}$. 
Given a linear transformation of the form 
$\hat{\bf y} = E_{-} \hat{\bf x} + E_{+} \hat{\bf x}^*$, we write
$$
\breve{\bf y} = 
\left[
\begin{array}{c}
 \hat{\bf y}  \\
  \hat{\bf y}^* \\
     \end{array} 
\right]
=
\Delta(E_{-}, E_{+}) \breve{\bf x}
:=
\left[ \begin{array}{cc}
    E_{-}   & E_{+} \\
     E_{+}^*  &  E_{-}^* \\
     \end{array} \right] 
\left[
\begin{array}{c}
 \hat{\bf x}  \\
  \hat{\bf x}^* \\
     \end{array} 
\right] ,  
$$
where $E_{-}^*, E_{+}^*$ denote the complex conjugates of the matrices 
$E_{-}, E_{+}$. 
For a $2n\times 2n$ matrix $X$ we define the {\it involution} 
$X^\flat = J^{(n)}X^\dagger J^{(n)}$ where 
$$
J^{(n)}:= \left[ \begin{array}{cc}
    I_n  & 0 \\
     0   &  - I_n \\
     \end{array} \right].  
$$
The $2n\times 2n$ matrix $\widetilde{S}$ is called $\flat$-unitary if 
$SS^\flat=S^\flat S$. 
The \emph{symplectic group} 
is the subgroup of $\flat$-unitaries of the form $S= \Delta (S_{-}, S_{+})$ 
with $S_\pm$ suitable $n\times n$ complex matrices. 
Moreover, any $n\times n$ unitary $U$ can be identified with the 
``doubled-up" element $\widetilde{U}= \Delta(U, 0)$ of the symplectic 
group, so the unitary group can be seen as a subgroup of the symplectic one. 


In order to describe the input-output relations for active systems we collect 
all of the system's variables into the vector 
$\breve{\veca}:=[\hat{a}_1,\ldots,\hat{a}_n, \hat{a}^*_1,\ldots,\hat{a}^*_n]^{T}$, 
which satisfies the commutation relations
$
[\breve{a}_i, \breve{a}_j^*] = J_{ij}.
$ 
For any symplectic matrix $S= \Delta (S_{-}, S_{+})$, there exists 
a Bogolubov transformation 
$\hat{\veca}^\prime = S_{-} \hat{\veca} + S_{+} \hat{\veca}^*$ which has 
the property that it preserves the above commutation relations. 
The system has a quadratic Hamiltonian of the form 
$$
\hat{H}= 
\breve{\veca}^\dagger \widetilde{\Omega} \breve{\veca}
$$
where $\widetilde{\Omega}:= -i \Delta(i\Omega_{-}, i\Omega_{+})$ is the 
generator of a symplectic transformation, 
i.e. $\exp(i\widetilde{\Omega})$ is a $\flat$-unitary. 
Equivalently, $\widetilde{\Omega} = \widetilde{\Omega}^\flat$, which 
means that the $n\times n$ matrices $\Omega_\pm$ satisfy the following 
conditions: $\Omega_{-}=\Omega_{-}^\dagger$ and $\Omega_{+}= \Omega_{+}^T$.  
The input $\hat{\bf B}(t)$ couples with the system through the operator
${L} = {C}_{-} \hat{\veca} + {C}_{+} \hat{\veca}^*$, where 
$ {C}_{-},  {C}_{+}$ are complex $m\times n$ matrices. 
In the Laplace domain, the input-output relations are given by 
\cite{GoughPRA2010}
$$
\mathcal{L}[\breve{\vecb}^{\rm out}] (s) 
= \widetilde{\Sigma} (s) \mathcal{L}[\breve{\vecb}] (s) 
$$ 
where $\widetilde{\Sigma} (s)$ is the transfer function 
\begin{eqnarray}
& & \hspace*{-2em}
\label{eq.transfer.fct.active}
\widetilde{\Sigma} (s):= 
 \left[ \begin{array}{cc}
    \Sigma_{-}(s) &  \Sigma_{+}(s) \\
      \Sigma_{+}(s^*)^*   &   \Sigma_{-}(s^*)^* \\
     \end{array} \right] = 
     I - \widetilde{C}(sI - \widetilde{A})^{-1} \widetilde{C}^\flat, 
\nonumber \\ & & \hspace*{-1.3em}
\mbox{}
\end{eqnarray}
with $\widetilde{C}:= \Delta(C_{-}, C_{+})$, and 
$\widetilde{A} := \Delta(A_{-}, A_{+})$, and 
$A_{\mp}:= -i\Omega_{\mp} - (C_{-}^\dagger C_\mp- C_{+}^T C_\pm^*)/2$.

As in the passive case, we would like to answer the following questions: 
what are the equivalence classes of dynamical parameters 
$(\widetilde{\Omega}, \widetilde{C})$ which have the same transfer function, 
and how can we estimate the identifiable parameters? 
Concerning the first question, we note that for any symplectic transformation 
$S$, the system with parameters $\Omega^\prime = S\widetilde{\Omega}S^\flat$ 
and $\widetilde{C}^\prime:= \widetilde{C}S^\flat$ has the  same transfer 
function \eqref{eq.transfer.fct.active}, and therefore all such parameters 
belong to the same equivalence class. 
As expected, the equivalence classes of general linear systems are larger than 
those of passive systems, since $n\times n$ unitaries are a subgroup of the 
symplectic group. 
We conjecture that the equivalence class is in fact completely determined 
by symplectic transformations, but this question will be addressed elsewhere.

Concerning the second question, we note that the active case differs from 
the passive one in some important respects, which are closely related to 
presence of squeezing elements in the dynamics. 
For instance, even if the input is in the vacuum state, the system's and 
output's stationary states may be mixed squeezed Gaussian states, and 
the two quantum systems may share quantum correlations. 
Although this makes the statistical analysis of the output state more involved, 
we expect that the tools developed for estimation of Gaussian states can be 
used to compute the quantum Fisher information of the output in terms of 
the transfer function, and to study the optimal input problem along the 
lines of the passive systems case.

\section{Conclusion and future works}

In Theorem \ref{equivalent class} we characterized the equivalence 
classes of linear input-output systems; 
minimal passive linear systems with the same transfer function are 
related by unitary transformations acting on the space of modes. 
Theorem \ref{infection theorem} states that systems satisfying the 
infection property are completely identifiable. 
Additionally, in Theorems \ref{th.5.1} and \ref{th.5.2} we provided two 
methods for finding the identifiable parameters and physical realizations 
for a given transfer function. 
We then addressed the statistical aspects of the system identification 
problem, and investigated the question of finding optimal input design 
and output measurement. 
The analysis is based on the statistical concepts of quantum and classical 
Fisher information. 
While for coherent inputs, the estimation error scales with the energy 
$E$ as $1/\sqrt{E}$, we showed that using non-classical input states 
we can attain the Heisenberg scaling $1/E$ due to the unitarity of the 
transfer function.


There are a number of direction in which this work can be extended. For instance, in control applications it may be relevant to identify  physical realizations which optimize  the prediction rather than the estimation error. Since for large networks the identification becomes intractable, it may be useful to develop new system identification methods inspired by quantum compressed sensing \cite{Gross} and dimensional reduction.   Switching from passive to active linear systems, we conjectured that the equivalence classes consist of systems related by symplectic rather than unitary transformations. 
The system identification problem can be considered in a different setting, where the input fields are stationary (quantum noise) but have a non-trivial covariance matrix (squeezing). In this case the characterization of the equivalence classes boils down to finding the  systems with the same power spectral density, a problem which is well understood in the classical setting \cite{Anderson1969} but not yet addressed in the quantum domain.


\section*{Acknowledgment}

M.G.'s work was supported by the EPSRC grant  EP/J009776/1. 
N.Y.'s work was supported by JSPS Grant-in-Aid No. 24760341. 
Both authors are grateful for the hospitality of the Isaac Newton Institute 
for Mathematical Sciences, Cambridge, where this work was completed during 
the Quantum Control Engineering meeting.


\section*{Appendix}

A single photon (field) state is defined by 
\begin{equation}
\label{single photon field}
     \ket{1_\xi}
        =\int_{-\infty}^\infty \xi(\omega)\hat b^*(\omega) d\omega \ket{0}, 
\end{equation}
where $\hat b^*(\omega)$ is the Fourier transform of the white noise creation 
operator $\hat b^*(t)$, 
and $\xi(\omega)$ is the frequency domain shape function satisfying 
$\int_{-\infty}^\infty|\xi(\omega)|^2 d\omega = 1$ \cite{Gheri 1998}. 

If $\ket{1_\xi}$ is taken as an input field state for a passive system that 
initially set to the ground state, then, in the long time limit the system 
returns to the ground state and the output is a single photon field state 
with pulse shape $\xi'(\omega)=\Xi(-i\omega)\xi(\omega)$ \cite{Guofeng2013}.
That is, as in the coherent input case, the output field state is completely 
characterized by the transfer function as follows:
\[
     \ket{1_{\xi'}}_{\rm out}
        =\int_{-\infty}^\infty \Xi(-i\omega)\xi(\omega)
              \hat b^*(\omega)d\omega \ket{0}. 
\]
We now suppose that the input pulse shape is enough broaden and so is 
confined around a fixed frequency $\omega$, thereby we denote 
$\ket{1_\xi}=\ket{1;\omega}$. 
Then, the output field state is given by 
$\ket{1; \omega}_{\rm out}= \Xi(-i\omega)\ket{1; \omega}$.
%
%
The $n$-photon field state is defined in a similar way by \cite{Baragiola}:
\[
     \ket{n_\xi}
        = \frac{1}{\sqrt{n!}}
             \Big[ \int_{-\infty}^\infty \xi(\omega)\hat b^*(\omega) d\omega \Big]^n
             \ket{0}.
\]
As above, if the input for a linear passive system is a 
$n$-photon field state with its pulse shape confined at around $\omega$, 
then the output is given by 
$\ket{n; \omega}_{\rm out}=\Xi(-i\omega)^n\ket{n; \omega}$.

\end{document}